\begin{document}

\title{Single and multiple consecutive permutation motif search
}

\author{%
    Djamal Belazzougui\inst{1}
    \and
    Adeline Pierrot\inst{2}
    \and
    Mathieu Raffinot\inst{2}
    \and
    Stéphane Vialette\inst{3}
}


\institute{}

\institute{%
   Department of Computer Science, FI-00014 University of Helsinki, Finland
  \email{}
  \and
  LIAFA, Univ. Paris Diderot - Paris 7, 75205 Paris Cedex 13, France
  \and
  LIGM CNRS UMR 8049,
  Universit\'e Paris-Est, France \\
  \email{vialette@univ-mlv.fr}
}

\date{\today}
\maketitle


\begin{abstract}
Let $t$ be a permutation (that shall play the role of
the {\em text}) on $[n]$ and a pattern $p$ be a sequence of
$m$ distinct integer(s) of $[n]$, $m\leq n$. The pattern $p$ occurs in
$t$ in position $i$ if and only if $p_1 \ldots p_m$ is
order-isomorphic to $t_i \ldots t_{i+m-1}$, that is, for all $1 \leq
k< \ell \leq m$, $p_k>p_\ell$ if and only if $t_{i+k-1}>t_{i+\ell-1}$.
Searching for a pattern $p$ in a text $t$ consists in identifying all
occurrences of $p$ in $t$. We first present a forward automaton which
allows us to search for $p$ in $t$ in $O(m^2\log \log m +n)$
time. We then introduce a Morris-Pratt automaton representation of the
forward automaton which allows us to reduce this complexity to
$O(m\log \log m +n)$ at the price of an additional amortized
constant term by integer of the text. Both automata occupy $O(m)$
space. We then extend the problem to search for a set of patterns and
exhibit a specific Aho-Corasick like algorithm. Next we present a
sub-linear average case search algorithm running in
$O\left(\frac{m\log m}{\log\log m}+\frac{n\log m}{m\log\log m}\right)$
time, that we eventually prove to be optimal on average.
\end{abstract}


\section{Introduction}

Two sequences are \emph{order-isomorphic} if the permutations required
to sort them are the same.
A sequence $p$ is said to be a \emph{pattern} (or \emph{occurs})
within a sequence $t$ if $t$ has a \emph{subsequence} that is
order-isomorphic to $p$.
Pattern involvement \emph{permutations} and sequences has now become
a very active area of research \cite{Kitaev:2011}.
However, only few results on the complexity of finding patterns in
permutations and sequences are known.
It appears to be a difficult problem
to decide of two given permutations $\pi$ and $\sigma$
whether $\sigma$ occurs in $\pi$, and in this generality the problem is NP-complete
\cite{Bose:Buss:Lubiw:1998}.
For $\sigma \in S_m$ and $\pi \in S_n$,
the $O(n^m)$ time brute-force algorithm was improved to
$O(n^{0.47m + o(m)})$ time in
\cite{Ahal:Rabinovich:2008}.
There are several ways in which this notion of permutation patterns may be generalized,
and we focus here on \emph{consecutive patterns}
(\emph{i.e.} the match is required to consist of contiguous elements)
 \cite{Kitaev:2011}.
A sequence $p$ is said to be a \emph{consecutive pattern} or \emph{consecutively occurs}
within a sequence $t$ if $t$ has a \emph{substring} that is
order-isomorphic to $p$.
Searching for a pattern $p$ in a text $t$
consists in identifying all occurrences of $p$ in $t$.
Recently, using a modification of the classical Knuth-Morris-Pratt string
matching algorithm, a $O(n + m\log m)$ time algorithm has been proposed
for checking if a given sequence $t$ of length $n$ contains a substring which is
order-isomorphic to a given pattern $p$ of length $m$
 \cite{Kubica:Kulczynski:Radoszewski:Rytter:Walen:2013}.
The time complexity reduces to $O(n + m)$ time under the assumption that the
symbols of the pattern can be sorted in $O(m)$ time.




The set of all integers from $1$ to $n$ is written $[n]$. Let $t$ be a
permutation of length $n$ and $p$ be a sequence of $m \leq n$ distinct
integers in $[n]$.
First we present a forward automaton
which allows us to search for $p$ in $t$ in $O(m^2\log \log m +n)$
time.
Next, we introduce a Morris-Pratt automaton
representation~\cite{MP70} of the forward automaton which allows us to
reduce this complexity to $O(m\log \log m +n)$ at the price of an
additional amortized constant term by integer of the text. Both
automata occupy $O(m)$ space. We then extend the problem to search for
a set of patterns and exhibit a specific Aho-Corasick like
algorithm. Finally we present a sub-linear average case search algorithm
running in $O(n \log m / \log \log m)$ time that we eventually prove
to be optimal on average.

Let us define some notations. The set of all permutations on $[n]$ is
denoted by $S_n$.  Let $\Sigma_n = [n]$.  Abusing notations, we
consider in this paper permutations of $S_n$ as strings without symbol
repetition, and we denote by $\Sigma_n^*$ the set of all strings
without symbol repetition (including the empty string), where each
symbol is an integer in $[n]$.  A {\em prefix} (resp. {\em suffix},
{\em factor}) $u$ of $p$ is a string such that $p=u w, w \in
\Sigma_n^*$. (resp. $p= w u, w \in \Sigma_n^*$, $p= wuz, w,z \in
\Sigma_n^*$. We also denote $|w|$ the number of integer(s) in a string
$w, w \in \Sigma_n^*$.  We eventually denote $p^r$ the reverse of $p$,
that is, the string formed by the symbols of $p$ read in the reverse
order.
We denote by $p^{\equiv}$ the set of words of $\Sigma_n^*$ which are
order-isomorphic to $p$.

The following property is useful for
designing automaton transitions.

\begin{property}
\label{trans}
Let $p=p_1 \ldots p_m \in \Sigma_n^*$ and $w=w_1\ldots w_\ell \in \Sigma_n^*$,
$\ell< m$,
such that $w$ is order-isomorphic to $p_1 \ldots p_\ell$, and
let $\alpha \in \Sigma$. Testing if $w\alpha$ is order-isomorphic to
$p_1 \ldots p_\ell p_{\ell+1}$ can be performed in constant time storing only
a pair of integers.
\end{property}

\noindent
{\bf Proof.} The pair of integers $(x_1,x_2)$ is determined as follows:
$x_1 \leq \ell$ is the greatest number such that $p_{x_1}$ is the position
of one of the largest integer in $p_1..p_\ell$ which is smaller than
$p_{\ell+1}$, if any. Otherwise, we fix $x_1$ arbitrarily to
$-\infty$. Let $x_2 \leq \ell$ be the greatest position of one
of the smallest integer in $p_1..p_\ell$ which is larger than
$p_{\ell+1}$, if any. Otherwise, we fix $x_2$ to $+\infty$. Now, it
suffices to test if $ w_{x_1} < \alpha < w_{x_2}$ to verify if
$w\alpha$ is order-isomorphic to $p_1\ldots p_{\ell+1}$ \qed

We define a function $\mbox{rep}(p=p_1\ldots p_m,j)$ which returns a
pair of integers $(x_1,x_2)$ that represents the pair defined in
property \ref{trans} for the prefix of length $j$ of a motif $p$.

\section{Tools}
\label{sec:tools}

Before proceeding, we first describe some useful data structures we shall
use as basic subroutines of our algorithms. The problem called
\emph{predecessor search problem} is defined as follows: given a set
$S=\{x_1,x_2,\ldots x_n\}\subset [u]$ ($u$ is called the size of
the universe), we support the following query: given an integer $y$
return its predecessor in the set $S$, namely the only element $x_i$
such that $x_i\leq y\leq x_{i+1}$~\footnote{By convention, if all the
  elements of $S$ are smaller than $y$, then return $-\infty$ and if
  they are larger than $y$ then return $x_n$}. In addition, in the
dynamic case, we also support updates: add or remove an element from
the set $S$.
The standard data structures to solve the predecessor search are the
balanced binary search trees~\cite{AVL63,Ba72}.  They use linear space
and support queries and updates in worst-case $O(\log n)$
time. However, there exists better data structures that take advantage
of the structure of the integers to get better query and update
time. Specifically, the Van-Emde-Boas tree~\cite{Bo77} supports
queries and updates in (worst-case) time $O(\log\log u)$ using $O(u)$ space. Using
randomization, the y-fast trie achieves linear space with queries
supported in time $O(\log\log u)$ and updates supported in randomized
$O(\log\log u)$ time.  The problem has received series of improvements
which culminated with Andersson and Thorup's result~\cite{AT07}. They
achieve linear space with queries and updates supported in
$O(\mathrm{min}(\log\log u,\sqrt{\frac{\log n}{\log\log n}}))$ (the
update time is still randomized).

A special case occurs when space $n$ is available and the set of keys $S$
is known to be smaller than $\log^c n$ for some constant $c$. In this
case all operations are supported in worst-case constant time using
the atomic-heap~\cite{Wi20}.
\section{\vspace*{-0.2cm}Forward search automaton}

The problem we consider is to search for a motif $p$ in a permutation
$t$ without preprocessing the text itself. By analogy to the simpler
case of the direct search of a word $p$ in text $t$, we build an
automaton that recognizes $(\Sigma_n^*)\dot p^{\equiv}$. We then prove its size
to be linear in the length of the pattern.

We formally define our forward search automaton ${\mathcal FD}(p)$ built
on $p=p_1 \ldots p_m$ as follows:
\begin{itemize}
\item $m+1$ states corresponding to each prefix (including the empty
  prefix) of $p$, state $0$ is initial, state $m$ is terminal;
\item $m$ forward transitions from state $j$ to $j+1$ labelled by
  $\mbox{rep}(p,j+1)$;
\item $bt$ backward transitions $\delta(x,[i,j])$, where $x$ numbers
  a state, $ 0 \leq x \leq m$, $i \in {1,\ldots,x} \cup {-\infty}$,
  $j \in {1,\ldots,x} \cup {+\infty}$, defined the following way:
  $\delta(x,[i,j]) = q$ if and only if for all $p_i < \alpha < p_j$
  (resp. $k=\alpha < p_j$ if $i = -\infty$, $p_i < \alpha$ if $j =
  +\infty$), the longest prefix of $p$ that is order-isomorphic to a
  suffix of $p_1 \ldots p_x \alpha$ is $p_1 \ldots p_q$.
\end{itemize}
We also impose some constraints on outgoing transitions. Let $x$ be a
given state corresponding to the prefix $p_1 \ldots p_x$.

 Let us sort all
$p_i, 1 \leq i \leq x$ and consider the resulting order
$p_{i_0}=-\infty<p_{i_1}<\ldots <p_{i_k}<+\infty=p_{i_{k+1}}$. We
build one outgoing transition for each interval
$[p_{i_j},p_{i_{j+1}}]$, excepted if $p_{i_{j+1}}=p_{i_{j}}+1.$ Also we
merge transitions from the same state to the same state that are labeled by
consecutive intervals.

It is obvious that the resulting automaton recognizes a given pattern
in a permutation by reading one by one each integer and choose the
appropriate transition. Figure \ref{aut1} shows such an automaton.

\begin{figure}[h]
  \centering
\includegraphics[width=9cm]{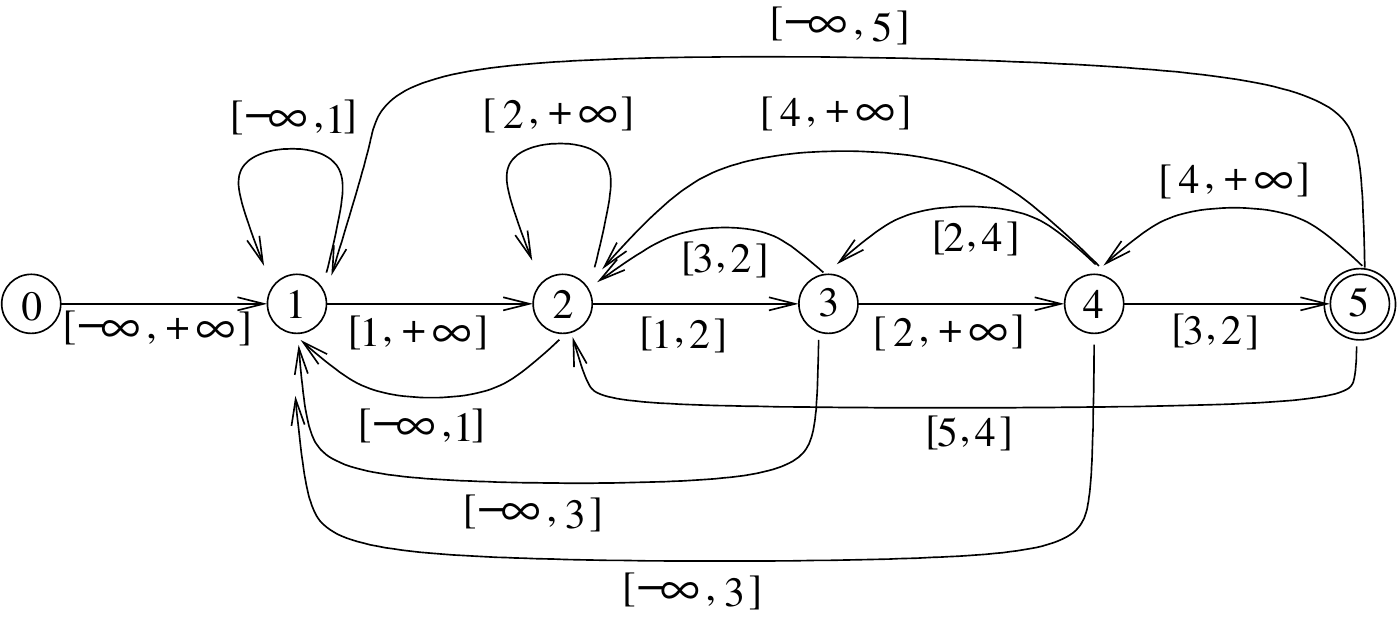}
\caption{Forward automaton built on $p=4 \,12 \,6 \,16 \,10$. State $0$ is
  initial and state $5$ is terminal.}
 \label{aut1}
\end{figure}

%
%
%
%
The main result on the structure of the forward automaton is the following.

\begin{lemma}
\label{lemmasize}
The number of transitions of the forward automaton built on $p_1 \ldots
p_m$ is linear in $m$.
\end{lemma}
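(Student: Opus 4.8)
The plan is to bound separately the two kinds of transitions. The $m$ forward transitions are accounted for trivially, so the whole burden falls on the backward transitions. The key observation is that, although the definition allows a backward transition $\delta(x,[i,j])$ for every state $x$ and every pair of ``neighbouring'' pattern values at positions $i,j\leq x$, the merging rule — transitions out of $x$ landing in the same target state and labelled by consecutive intervals are coalesced — means that the number of \emph{distinct} outgoing transitions from state $x$ is one more than the number of distinct target states reachable from $x$ by a single integer read, after we discard the intervals of width one (those $[p_{i_j},p_{i_{j+1}}]$ with $p_{i_{j+1}}=p_{i_j}+1$, which carry no integer). So it suffices to show that $\sum_{x=0}^{m}(\text{number of distinct targets from }x)=O(m)$.

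First I would set up the standard failure-function machinery: for a state $x$, reading an integer $\alpha$ lands us in the state $q=\delta(x,[i,j])$ that is the length of the longest prefix of $p$ order-isomorphic to a suffix of $p_1\ldots p_x\alpha$. As in the Morris--Pratt analysis of the companion automaton, $q\le x+1$, and the value $q$ as $\alpha$ ranges over all of $\Sigma_n$ traces out the border chain of $p_1\ldots p_x$ together with the single ``success'' value $x+1$. The crucial quantitative fact I would isolate is: reading $\alpha$ can send $x$ to $x+1$ for at most one interval-class (the success transition), and for the remaining targets $q\le x$ the target is obtained by \emph{failure links}, so the multiset of targets from $x$, restricted to values $\le x$, is contained in $\{\pi(x),\pi(\pi(x)),\ldots\}$ where $\pi$ is the order-isomorphic border (failure) function. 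The total count is then handled by a potential/amortization argument: charge each backward transition $(x\to q)$ with $q\le x$ to the increment it represents along the failure chain, exactly as one bounds the total work of KMP over all input symbols by $2m$; the telescoping gives $\sum_x(\#\text{targets}_{\le x} \text{ from } x)=O(m)$. Adding the $\le m$ success transitions and the $\le m$ forward transitions yields the linear bound.

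The main obstacle is the second half of the argument: one must verify that distinct interval-classes out of the same state $x$ really do go to distinct states (so that ``number of intervals'' and ``number of targets'' differ by the number of collapsed width-one intervals plus at most one), and that the width-one intervals, which are deleted, cannot be so numerous as to matter — but they are deleted precisely because they contribute nothing, so the count we maintain is the count of surviving transitions. Concretely I would argue: order the sorted values $-\infty=p_{i_0}<p_{i_1}<\cdots<p_{i_k}<p_{i_{k+1}}=+\infty$ at state $x$; along this list the target state $\delta(x,[p_{i_j},p_{i_{j+1}}])$ is a non-trivial function, and the merging rule keeps exactly one representative per maximal run of equal targets, so the number of surviving outgoing transitions from $x$ equals the number of distinct target values, which we have just bounded via the failure chain. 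Summivelyg over $x$ completes the proof; the only genuinely delicate point is making the failure-chain charging rigorous, which is where I expect to spend the bulk of the write-up, mirroring the amortized analysis of Morris--Pratt but with ``equality of letters'' replaced throughout by ``order-isomorphism of the relevant suffix/prefix pair'' as licensed by Property~\ref{trans}.
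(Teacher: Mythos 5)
Your high-level plan (forward transitions are trivially $m$, so bound the backward ones) matches the paper, but the two steps that carry all the weight are respectively unproved and incorrect. First, the central counting step. You propose to bound $\sum_x \#\{\text{distinct targets from }x\}$ by a ``KMP-style telescoping along the failure chain.'' That amortization is a statement about the \emph{dynamic} search phase (state increases by at most one per text symbol, so failure-link descents are globally bounded); it does not transfer to counting transitions of the \emph{static} automaton, where there is no text to charge against and where, a priori, every state on the border chain of $x$ could be a target of some interval out of $x$, giving $\Theta(m^2)$. The paper's actual argument (Simon's technique) is a global injectivity claim: for any backward transition $q=\delta(x,[i,j])$ with $q\geq 2$, the shift $\ell=x-q$ determines the transition uniquely \emph{across the whole automaton} --- two transitions with the same shift would force $p_1\ldots p_{q'}$ to be simultaneously order-isomorphic and not order-isomorphic to the corresponding factor, a contradiction. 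This bounds all backward transitions with target $\geq 2$ by $m-2$. Nothing in your telescoping sketch establishes this (or any substitute for it).

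Second, your claim that after merging ``the number of surviving outgoing transitions from $x$ equals the number of distinct target values'' is false: merging only coalesces transitions to the same state labelled by \emph{consecutive} intervals, and the same target can recur in non-adjacent interval runs, each run surviving as a separate transition. This is precisely why the paper treats transitions into state $1$ separately: many intervals out of $x$ can all target state $1$, interleaved with the other backward transitions, so their number is bounded by $N(x)+2$ where $N(x)$ counts the transitions from $x$ with target $\geq 2$; summing gives at most $2m-4$ such $1$-transitions. Without both the global shift-injectivity argument and the interleaving bound for $1$-transitions, your proof does not go through.
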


Lemma~\ref{lemmasize} combined with the fact that the outgoing
transitions from each state $q$ are sorted accordingly to the closest
proximity to $q$ of their arrival state leads to the following lemma.

\begin{lemma}
\label{searchphase}
Searching for a consecutive motif $p=p_1 \ldots p_m$ in a permutation
$t=t_1 \ldots t_n$ using a forward automaton built on $p$ takes
$O(n)$ time.
\end{lemma}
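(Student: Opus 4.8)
The plan is to argue that the search is a standard traversal of the forward automaton that spends amortized constant time per text symbol, charging the cost of following backward transitions against the progress made by forward transitions. First I would recall the structure established so far: by Lemma~\ref{lemmasize} the automaton has $O(m)$ transitions, and at each state $q$ the outgoing transitions are stored sorted according to the proximity of their target states to $q$ (the forward transition to $q+1$ first, then backward transitions ordered by decreasing target index). The search maintains a current state $q$ and reads $t_1,\dots,t_n$ one by one; upon reading $t_i$ from state $q$, one must locate the unique outgoing transition whose interval label contains the value $t_i$ is mapped to — equivalently, by Property~\ref{trans} and the definition of $\mbox{rep}$, one tests whether $w\,t_i$ (where $w$ is the current window) is order-isomorphic to the relevant prefix, and if so takes the forward transition, otherwise follows a backward transition and repeats from the new state on the same symbol $t_i$.

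The key quantitative step is the amortization. Each forward transition taken on reading a symbol increases $q$ by exactly one, and $q$ never exceeds $m$; each backward transition taken strictly decreases $q$ (backward transitions go to states with smaller index by construction). Since over the whole run we read $n$ symbols and $q$ starts at $0$, the total increase of $q$ is at most $n$ (at most one forward step is "committed" per symbol once the right transition is found — more precisely, a symbol either triggers exactly one forward step or leaves us at a state we then stay in), hence the total decrease, i.e. the total number of backward transitions followed, is also $O(n)$. Thus the total number of transitions traversed over the whole search is $O(n)$.

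It remains to argue that finding the correct outgoing transition at a state costs only amortized constant time, so that the per-transition work is $O(1)$ amortized. Here I would use the stated sorting of outgoing transitions by proximity to $q$: we scan the outgoing transitions of the current state in that order, and the first one whose interval contains $t_i$'s rank is the one to take; crucially, scanning past $k$ transitions that do not match corresponds to $k$ backward jumps' worth of "lost progress", and since the total progress lost is $O(n)$ by the previous paragraph, the total scanning cost is $O(n)$ as well. Combining, the whole search runs in $O(n)$ time. The main obstacle is making the amortization bookkeeping precise — in particular verifying that the ordering of transitions by target proximity really guarantees that the scan at each state stops exactly when we are about to make the "first fitting" backward jump, so that scan steps can be charged one-to-one against net decreases of the state index; once that correspondence is nailed down, the potential-function argument $\Phi = q$ closes the bound immediately.
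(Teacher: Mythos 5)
Your proposal is correct and follows essentially the same amortization as the paper's own proof: outgoing transitions of each state are ordered by proximity of their target state, the cost of scanning past non-matching transitions is charged against the drop in the state index, and the total drop is bounded by the total rise, which is at most one forward step per text symbol, yielding an $O(n)$ (in the paper, $\leq 2n$) bound on transitions touched. One small inaccuracy worth fixing: in the forward automaton (unlike the Morris--Pratt representation treated afterwards) the backward transition whose interval contains $t_i$ itself consumes $t_i$ and lands directly at the correct state, so there is no ``repeat from the new state on the same symbol''; this does not affect the amortized bound.
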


We can build the forward automation in  $O(m^2\log \log m)$
time. However, we defer the proof of this construction for the following
reason. This $O(m^2\log \log m )$ complexity might be too large for
long patterns. Nevertheless, we show below that we can compute in a
first step a type of Morris-Pratt coding of this automaton which can
either (a) be directly used for the search for the pattern in the text
and will preserve the linear time complexity at the cost of an
amortized constant term by text symbol, or (b) be developed to build
the whole forward automaton structure.


Therefore we present and build a new automaton ${\mathcal MP}$ that is a
Morris-Pratt representation of the forward automaton. The idea is to
avoid building all backward transitions by only considering a special
backward single transition from each state $x, x>0$ named {\em
  failure} transition. We formally define our automaton ${\mathcal MP}(p)$
built on $p=p_1 \ldots p_m$ the following way:

\begin{itemize}
\item $m+1$ states corresponding to each prefix (including the empty
  prefix) of $p$, state $0$ is initial, state $m$ is terminal;
\item $m$ forward transitions from state $j$ to $j+1$ labelled by
  $\mbox{rep}(p,j+1)$;
\item $m$ failure transitions (non labelled) defined by:
  a failure transition connects a state $j>0$ to a state $k<j$ if and
  only if $p_1\ldots p_k$ is the largest order-isomorphic border of
  $p_1 \ldots p_j$.
\end{itemize}

\begin{figure}[h]
  \centering
\includegraphics[width=8cm]{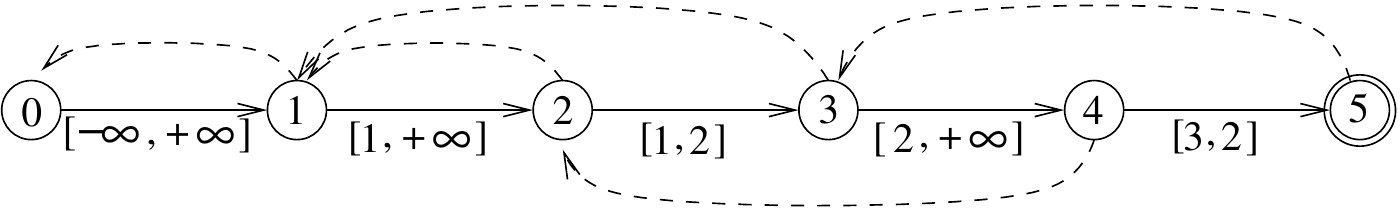}
\caption{${\mathcal MP}$ automaton built on $p=(4,12,6,16,10)$.
State $0$ is initial and state $5$ is terminal. Backward transitions are failure transitions.}
 \label{mp1}
\end{figure}

Reading a text $t$ through the MP representation of the forward
automaton is performed the following way. Let us assume we reached
state $x<m$ and we read a symbol $t_i$ at position $i$ of the text.
Let $[k,\ell] =\mbox{rep}(p,x+1).$ If $t_i \in [t_{i-m+k},t_{i-m+l}]$ we
follow the forward transition and the new current state is
$x+1$. Otherwise, we {\em fail} reading $t_i$ from $x$ and we retry
from state $q = \mbox{fail}(x)$ and so-on until (a) either $q$ is
undefined, in which case we start again from state $0$, either (b) a
forward transition from $q$ to $q+1$ works, in which case the next
current state is $q+1$.

\begin{lemma}
\label{kmp}
Searching for a pattern $p$ in a text $t_1 \ldots t_m$ using the
Morris-Pratt representation of the forward automaton built on $p$ is
$O(n)$ time.
\end{lemma}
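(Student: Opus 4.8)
The plan is to mirror the classical amortized analysis of the Morris--Pratt string matching algorithm, using the quantity $2i - x$ as a potential function, where $i$ is the current position in the text and $x$ is the current state of the automaton. First I would observe that each step of the search reads a text symbol $t_i$ from some state $x$ and does one of three things: (a) it follows a forward transition, advancing to state $x+1$ and moving on to position $i+1$; (b) it fails at $x$ and jumps to $\mbox{fail}(x) < x$ without consuming $t_i$; or (c) it fails and $\mbox{fail}(x)$ is undefined, so it restarts from state $0$, again without consuming $t_i$ (here one must note that when $x=0$ and the forward transition fails, we simply advance $i$ with the state staying at $0$). In case (a) the potential $2i-x$ increases by $2 \cdot 1 - 1 = 1$; in cases (b) and (c) the potential strictly decreases, since $i$ is unchanged and $x$ strictly decreases (by convention we treat the restart from $0$ as $x \mapsto 0$, which is a decrease whenever $x>0$, and the pure advance at $x=0$ is accounted separately as an $O(1)$-per-position cost). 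Since $x$ is always in $[0,m]$ and $i$ ranges over $[1,n]$, the potential is bounded between $-m$ and $2n$, so the total number of failure steps over the whole run is $O(n + m) = O(n)$ (using $m \le n$).

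Next I would argue that each individual step takes constant time. By Property~\ref{trans}, evaluating whether the forward transition from state $x$ applies amounts to retrieving the stored pair $[k,\ell] = \mbox{rep}(p,x+1)$ and testing the single inequality $t_{i-x+k-1} < t_i < t_{i-x+\ell-1}$ (with the obvious conventions when $k = -\infty$ or $\ell = +\infty$), which is $O(1)$; following a failure transition is a single pointer move, also $O(1)$. Hence the running time is proportional to the total number of steps, which is $O(n)$ by the potential argument, plus the $O(m)$ (in fact $O(m\log\log m)$, but that is absorbed elsewhere) cost of having the automaton available, which is not counted in the search phase here.

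The one delicate point, and the step I expect to require the most care, is the bookkeeping of the text window: when we are in state $x$ having just read $t_i$, the symbols we must compare against are $t_{i-x}, \ldots, t_{i-1}$ together with the incoming $t_i$, and after a failure transition to $q = \mbox{fail}(x)$ the relevant window shrinks to $t_{i-q}, \ldots, t_{i-1}$. I need to check that this is consistent, i.e.\ that the failure transition is defined precisely so that $t_{i-q} \ldots t_{i-1}$ is order-isomorphic to $p_1 \ldots p_q$ whenever $t_{i-x} \ldots t_{i-1}$ is order-isomorphic to $p_1 \ldots p_x$ --- this is exactly the defining property of the largest order-isomorphic border, and it is what guarantees that re-reading $t_i$ from state $q$ is meaningful. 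Once this invariant is established by induction on the number of steps, correctness and the $O(n)$ bound follow as above. Note also that the statement as written says $t_1 \ldots t_m$; this is a typo for $t_1 \ldots t_n$, and the bound is $O(n)$ as claimed. \qed
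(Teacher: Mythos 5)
Your proposal follows the same route as the paper: the standard Morris--Pratt amortization, with the paper phrasing it as ``each failure transition strictly decreases the state, so the number of failure transitions passed through is at most the number of forward transitions, which is at most $n$, plus at most $n$ reinitialisations, giving $3n$,'' while you package the identical counting into a potential function. Your additional remarks --- that each transition test is $O(1)$ by Property~\ref{trans}, that the border invariant justifies re-reading $t_i$ from the failure state, and that $t_1\ldots t_m$ in the statement is a typo for $t_1\ldots t_n$ --- are all consistent with the paper.

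There is, however, a sign slip in your potential argument that you should fix. With $\Phi = 2i - x$, a failure transition leaves $i$ unchanged and strictly decreases $x$, so $\Phi$ strictly \emph{increases}, not decreases. The clean version of your argument is: $\Phi$ increases by exactly $1$ on a forward step and by at least $1$ on a failure or restart step, it starts at $O(1)$ and never exceeds $2n$, hence the total number of steps is $O(n)$. As literally written (``increases on forward, decreases on failure, and is bounded between $-m$ and $2n$''), the boundedness of $\Phi$ does not by itself bound the number of failure steps, since $\Phi$ could oscillate; you would additionally have to bound the total increase by the number of forward steps (at most $n$, one per text symbol) and then bound the total decrease by the total increase plus $m$ --- which is exactly the paper's direct charging argument. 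Either correction is one line, and the rest of your proof stands.
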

In order to prove lemma \ref{kmp} we need
to focus on the classical notion of border that we extend to our framework.
\vspace*{-0.1cm}
\begin{definition}
 Let $p \in \Sigma_n^*$. A {\em border} of $p$ is a word $w\Sigma_n^*,
 |w|<|p|$ that is order-isomorphic to a suffix of $p$ but also
 order-isomorphic to a prefix of $p$.
\end{definition}

The construction of the forward automation relies of the maximal
border of each prefix that is followed by an appropriate integer in
the pattern. The Morris-Pratt approach is based on the following property:

\begin{property}
A border of a border is a border.
\end{property}

This property allows us to replace the direct transition of the
forward algorithm by a search along the borders, from the longest to
the smallest, to identify the longest one that is followed by the
appropriate integer.
We prove now that we can build the Morris-Pratt representation of the
forward automaton efficiently.

\begin{lemma}
\label{kmpbuild}
Building an Morris-Pratt representation of the forward automaton on a
consecutive motif $p=p_1 \ldots p_m$ can be performed in (worst-case)
$O(m \log\log m)$ time.
\end{lemma}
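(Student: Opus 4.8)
The goal is to build the MP representation — i.e. compute $\mathrm{fail}(j)$ for every prefix length $j$ together with the associated pair $\mathrm{rep}(p,j)$ — in $O(m\log\log m)$ time. I would follow the classical Morris-Pratt failure-function construction, processing prefixes of $p$ in increasing order of length, but replacing the character-equality test by the order-isomorphism test of Property~\ref{trans}, and maintaining enough auxiliary information so that each such test runs in (amortized) constant time.

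\medskip
\noindent\textbf{Key steps.}
\emph{(1) Precompute the $\mathrm{rep}$ pairs.} For each $j$, the pair $\mathrm{rep}(p,j+1)=(x_1,x_2)$ of Property~\ref{trans} is the greatest position of a largest element of $p_1\ldots p_j$ below $p_{j+1}$, and symmetrically for $x_2$. These can all be obtained by inserting $p_1,\dots,p_m$ one at a time into a predecessor structure over the universe $[n]$ (restricted to $[m]$ after an $O(m)$-time rank reduction, since only relative order matters), querying the predecessor and successor of $p_{j+1}$ before inserting it. Using a van~Emde~Boas tree or a $y$-fast trie this costs $O(\log\log m)$ per step, hence $O(m\log\log m)$ total, and $O(m)$ space.

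\emph{(2) Run the Morris-Pratt loop with the isomorphism test.} We set $\mathrm{fail}(1)=0$ and, for $j=2,\dots,m$, start from $k=\mathrm{fail}(j-1)$ and repeatedly test whether appending $p_j$ to the occurrence of $p_1\ldots p_k$ sitting as a suffix of $p_1\ldots p_{j-1}$ yields an order-isomorphic copy of $p_1\ldots p_{k+1}$; if not, replace $k$ by $\mathrm{fail}(k)$, and stop when the test succeeds (then $\mathrm{fail}(j)=k+1$) or when $k$ becomes undefined (then $\mathrm{fail}(j)=0$). The correctness is exactly the classical argument together with the ``a border of a border is a border'' property: the longest order-isomorphic border of $p_1\ldots p_j$ extends some order-isomorphic border of $p_1\ldots p_{j-1}$, and these are enumerated, longest first, by iterating $\mathrm{fail}$. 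The number of iterations is bounded by the usual amortized potential argument ($k$ increases by at most one per value of $j$ and never goes below $0$), giving $O(m)$ total iterations.

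\emph{(3) Make each test constant time.} By Property~\ref{trans}, testing whether $w\alpha$ is order-isomorphic to $p_1\ldots p_{k+1}$, where $w$ is order-isomorphic to $p_1\ldots p_k$, needs only the two ``flanking'' values of $w$ around $\alpha$ determined by the pair $\mathrm{rep}(p,k+1)$. Here $w=p_{j-k}\ldots p_{j-1}$ and $\alpha=p_j$, so the two flanking values are $p_{(j-k-1)+x_1}$ and $p_{(j-k-1)+x_2}$ where $(x_1,x_2)=\mathrm{rep}(p,k+1)$, which we precomputed in step~(1); the test is then simply $p_{j-k-1+x_1}<p_j<p_{j-k-1+x_2}$, done in $O(1)$.

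\medskip
\noindent\textbf{Main obstacle.} The only delicate point is step~(3): one must be sure that the flanking positions needed for the test on the candidate border $p_1\ldots p_k$ can be read off directly from the precomputed $\mathrm{rep}(p,k+1)$ and an offset into $p$, rather than requiring a fresh predecessor query on the sub-window $p_{j-k}\ldots p_{j-1}$ (which would cost $O(\log\log m)$ per iteration and spoil the bound). This works precisely because $p_{j-k}\ldots p_{j-1}$ is order-isomorphic to $p_1\ldots p_k$, so the position within the window of the relevant largest-below and smallest-above elements is the same as in $p_1\ldots p_k$, and is exactly what $\mathrm{rep}(p,k+1)$ records; one only has to verify this order-isomorphism invariant is maintained along the failure chain, which follows from the definition of border. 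Granting this, the total cost is $O(m\log\log m)$ for step~(1) plus $O(m)$ for steps~(2)--(3), and the structure uses $O(m)$ space. \qed
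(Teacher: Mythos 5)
Your proof is correct, and it reaches the stated bound by a route that differs from the paper's in its key mechanism. Both arguments share the same skeleton: reduce the symbols of $p$ to ranks in $[m]$, sweep the pattern left to right, obtain each pair $\mathrm{rep}(p,j+1)$ from a predecessor/successor query on an insert-only structure (total $O(m\log\log m)$), and compute the failure links by the classical Morris--Pratt iteration over the border chain, justified by ``a border of a border is a border.'' Where you diverge is in how each border-extension test is carried out. The paper maintains a \emph{second}, fully dynamic predecessor structure holding the current window $p_{j-i}\ldots p_{j-1}$; at each candidate border length $i$ it queries the predecessor and successor of $p_j$ in that window and compares the returned position pair against $\mathrm{rep}(p,i+1)$, deleting the discarded symbols when a failure link is followed. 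The cost is $O(\log\log m)$ per operation, amortized over $O(m)$ inserts and deletes. You instead observe that, since $p_{j-k}\ldots p_{j-1}$ is order-isomorphic to $p_1\ldots p_k$ for every border length $k$ in the chain, Property~\ref{trans} applies verbatim with $w=p_{j-k}\ldots p_{j-1}$ and $\alpha=p_j$, so the test collapses to the two comparisons $p_{j-k-1+x_1}<p_j<p_{j-k-1+x_2}$ with $(x_1,x_2)=\mathrm{rep}(p,k+1)$ already precomputed --- constant time per test, $O(m)$ total by the usual potential argument. Your version thus dispenses with the second data structure entirely and shows that, after the $O(m\log\log m)$ preprocessing of the $\mathrm{rep}$ pairs, the failure function itself costs only $O(m)$; this is at least as strong as the paper's bound and arguably cleaner. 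The one point you rightly flag as delicate --- that the flanking positions can be read off from $\mathrm{rep}(p,k+1)$ plus an offset rather than recomputed --- is exactly the content of Property~\ref{trans} combined with the border invariant, so the argument is complete.
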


Lemma \ref{kmp} and \ref{kmpbuild} allow us to state
the main theorem of this section.

\begin{theorem}
\label{theoall}
Searching for a consecutive motif $p=p_1 \ldots p_m$ in a permutation
$t=t_1 \ldots t_n$ can be done in $O(m\log\log m+ n)$ time.
\end{theorem}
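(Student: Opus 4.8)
The plan is to assemble Theorem~\ref{theoall} directly from the two lemmas that precede it, namely Lemma~\ref{kmpbuild} (building the Morris--Pratt representation of the forward automaton in worst-case $O(m\log\log m)$ time) and Lemma~\ref{kmp} (searching a text with that representation in $O(n)$ time). Since the search algorithm only needs the $\mathrm{rep}(p,j)$ pairs on forward transitions together with the failure links, the ${\mathcal MP}(p)$ structure is exactly what the search phase consumes, so no conversion to the full forward automaton is required. The preprocessing and search phases are independent and sequential, hence their costs add.

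First I would invoke Lemma~\ref{kmpbuild} to construct ${\mathcal MP}(p)$ in $O(m\log\log m)$ time; note this dominates the $O(m)$ space bound for the structure, and in particular all $m$ values $\mathrm{rep}(p,j)$ are available afterwards. Next I would run the text-scanning procedure described just before Lemma~\ref{kmp}: maintaining a current state $x$, for each text position $i$ we read $t_i$, compute $[k,\ell]=\mathrm{rep}(p,x+1)$, and follow the forward transition if $t_i\in[t_{i-m+k},t_{i-m+\ell}]$, otherwise follow failure links until either a forward transition succeeds or we fall back to state~$0$. By Lemma~\ref{kmp} the total number of failure-link traversals over the whole scan is $O(n)$ (the standard amortized argument: each forward step raises the current state by one, each failure step lowers it by at least one, and the state never goes negative), and each individual transition check costs $O(1)$ by Property~\ref{trans}. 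Whenever state $m$ is reached we report an occurrence. Summing, the scan is $O(n)$, so the whole algorithm runs in $O(m\log\log m)+O(n)=O(m\log\log m+n)$ time.

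There is essentially no hard part here; the theorem is a packaging of the earlier lemmas, and the only thing to be careful about is that the search phase genuinely does not need any backward transition beyond the failure links, and that the index $i-m+k$ used to access the text when testing a forward transition from state $x$ is well-defined and correctly aligned (it points into the length-$m$ window whose order-type is being matched). One should also observe that the $O(n)$ bound in Lemma~\ref{kmp} already absorbs the promised ``amortized constant term by integer of the text'', so no separate accounting is needed. Hence the theorem follows immediately.

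\qed
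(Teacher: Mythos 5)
Your proposal is correct and matches the paper's own argument: the paper derives Theorem~\ref{theoall} exactly by combining Lemma~\ref{kmpbuild} ($O(m\log\log m)$ construction of the ${\mathcal MP}$ representation) with Lemma~\ref{kmp} ($O(n)$ search using it), and summing the two phases. Your additional remarks on the amortization and on not needing the full forward automaton are consistent with the paper's discussion surrounding those lemmas.
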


The Morris-Pratt representation of the forward automaton permits to
search directly in the text at the price of larger amortized
complexity (considering the constant hidden by the $O$ notation) than
that required by searching with the forward automaton directly. If the
real time cost of the search phase is an issue, the forward automaton
can be built form its Morris-Pratt representation as follows.

\begin{property}
\label{forward}
Building the forward automaton of a consecutive motif $p=p_1 \ldots
p_m$ can be performed in $O(m^2\log\log m)$ time.
\end{property}

An interesting point is that the construction of the forward automaton
from its Morris-Pratt representation can also be performed in a lazy
way, that is, when reading the text. The missing transitions are then
built {\em on the fly} when needed.

\section{Multiple worst case linear motif searching}

We can extend the previous problem defined for a single pattern to a
set of patterns $S$.
We note by $d$ the number of patterns, by $m$ the
total length of the patterns and by $r$ the length of the longest
pattern.
For this problem we adapt the Aho-Corasick
automaton~\cite{AC75} (or ${\mathcal AC}$ automaton for short).  The ${\mathcal AC}$ automaton
is a generalization of the ${\mathcal MP}$ automaton to a set of multiple
patterns. We note by $P$ the set of prefixes of strings in $S$.
In order to simplify the description we will assume that the set
of patterns $S$ is prefix-free. That is, we will assume that no pattern is prefix
of another. Extending the algorithm to the case where $S$ is non-prefix free, should not
pose any particular issue.
The states of the ${\mathcal AC}$ automaton are defined in the same way as in the $\mathcal MP$ automaton. Each state $t$ in the ${\mathcal AC}$ automaton corresponds uniquely to
a string $p\in P$. The forward transitions are defined as follows:
there exists a forward transition connecting state $s$ to each state
corresponding to an element $pc\in P$ (where $c$ is a single symbol). Thus
this definition of the forward transitions matches essentially the
definition of the forward transitions in the ${\mathcal MP}$ automaton.
The failure transitions are defined as follows: a failure transition
a state $s$ corresponding a string $p$ to the state
$s'$ corresponding to the longest string $q$ such that $q\in P$ and
$q\neq p$.  The matching using the ${\mathcal AC}$ automaton is done in the same
way as in the $\mathcal MP$ automaton using the forward and failure transitions.

\subsection{Our extension of the ${\mathcal AC}$ automaton}

We could use exactly the same algorithm as the one used previously for
our variant of the $\mathcal MP$ automaton with few differences.  We
describe our modification to ${\mathcal AC}$ automaton to adapt it to
the case of consecutive permutation matching. An important observation
is that we could have two or more elements of $P$ that are both of the
same length and order-isomorphic. Those two elements should have a
single corresponding state in the ${\mathcal AC}$ automaton. Thus, if
two or more elements of $P$ are order-isomorphic then we keep only one
of them. For the forward transitions, we can a associate a pair of positions
$(x_1,x_2)$ to each forward transition. Then we can check which
transition is the right one by checking the condition
$t_{i-m+x_1}<t_i<t_{i-m+x_2}$ for every pair $(x_1,x_2)$ and take the
corresponding transition. The main problem with this approach is that
the time taken would grow to $O(d)$ time to determine which transition
to take which can lead to a large complexity if $d$ is very large.
Our approach will instead be based on using a binary search tree (or
more sophisticated predecessor data structure). With the use of a
binary search tree, we can achieve $O(\log m)$ time to decide which
transition to take. More precisely, each time we read $T[i]$ we insert
the pair $(t_i,i)$ into the binary search tree. The insertion uses the
number $t_i$ as a key. Now suppose that we only pass through forward
transitions. Then a transition at step $i$ is uniquely determined by:
(1) the current state $s$ corresponding to an element $p\in P$; (2)
the position of the predecessor of $t_i$ among $t_{i-|p|}\ldots
t_{i-1}$.
{\bf Preprocessing.}
We now show that the preprocessing phase can be done in worst-case
$O(m\log\log r)$ time.
As before our starting point will be to sort all the patterns and reduce
the range of symbols of each pattern of length $\ell$ from range $[n]$
to the range $[1..\ell]$. This takes worst-case time $O(m\log\log r)$.

Recall that two or more elements of $P$ of the same length and order-isomorphic
should be associated with the same state in the ${\mathcal AC}$ automaton.
In order to identify the order-isomorphic elements of $P$, we will carry a first step
called normalization. It consists in normalizing each pattern.
A pattern $p$ is normalized by
replacing each symbol $p_j$ by the pair $\mbox{rep}(p=p_1\ldots
p_{j-1},j)$ (consisting in the positions of the predecessor and
successor among symbols $p_1\ldots p_{j-1}$). This can be done for
all patterns in total $O(m\log\log r)$ time. In the next step, we build a
trie on the set of normalized patterns. This takes linear time. The
trie naturally determines the forward transitions. More precisely
any node in the trie will represent a state of the automaton
and the the labeled trie transitions will represent follow transitions.

Note that unlike the forward automaton (or the ${\mathcal MP}$ automaton)
there could be more than one outgoing forward transition from each node.
In order to encode the outgoing transition from each node,
we will make use of a hash table that stores all
the transitions outgoing from that node. More precisely
for each transition labeled by the pair $\mbox{rep}(p=p_1\ldots p_{j-1},j)$
and directed to a state $q$, the hash table will associate the key $p_1$
associated with the value $q$.
Now that the next transitions have been successfully built, the final
step will be to build the failure transitions and this takes more effort.
In order to build the failure transitions we decompose the
trie into $r$ layers. The first layer consists in the nodes of the trie that
represent prefixes of length $1$. The second layer consist in all the nodes
that represent prefixes of length $2$, etc.

Next, we will reuse the same algorithm that was used in~\ref{kmpbuild} to
build the ${\mathcal MP}$ automaton but adapted to work on
the ${\mathcal AC}$ automaton.
Instead of using a single predecessor data structure we will use multiple
predecessor data structures and attach a pointer to a predecessor data structure
at each trie node.
A node of the original non compacted trie will share the same predecessor
with its parent, iff it is the only child of its parent.
The following building phases will no longer reuse the normalized patterns, but
instead reuse the original patterns. To each node, we attach a pointer
to one of the original pattern.
More precisely if a node has a single child, then his pattern pointer
will be the same as its (only) child pattern pointer. If a node has more than one child
(in which case it is called a \emph{branching node}), then it will point to the
shortest pattern in its subtree. If a node is a leaf then it will directly point
to the corresponding pattern.
A predecessor data structure of a node whose pattern pointer points to a pattern
of length $u$ will have capacity to hold $u$ keys from universe $u$ and thus will
use $O(u)$ space. This is justified by the fact that the predecessor data structure
will only hold at most $u$ elements of the patterns and each element value
is at most $u$ (recall that the pattern is a permutation of length $u$).

In order to bound the total number of predecessor data structures and their total size,
we consider a compacted version of the trie (Patricia trie), where
each node with a single child is merged with that single child.
A node in the original (non-compacted) trie with two of more children is called
\emph{branching node}. It is clear that the set of nodes of a patricia (compacted)
trie are precisely the branching nodes and the leaves of the original trie.

It is a well known fact that a Patricia trie with $r$ leaves has at most $2r-1$
nodes in total. Thus the total number of predecessor data structures will be upper
bounded by $2r-1$.
During the building if a node at layer $t$ has a single child, then
that single child at level $t+1$ will inherit the predecessor data structure
of its parent. Otherwise if the node $v$ at level $t$ has two or more children
at level $t+1$, then a predecessor data structure is created for each child $u$.
Then if the predecessor data structure of $v$ contains exactly $k$ elements,
those elements are precisely $x_{t+1-k}\ldots x_t$, where $x$ is string pointed
by $v$. We will insert the k elements $y_{t+1-k}\ldots y_t$
into the predecessor data structure of $u$, where $y$ the string pointed by $u$.


In order to bound the total space used by the predecessor data
structures, we notice that the total capacities of all predecessor
data structures is $O(m)$. This can easily be proved.
Because we know that the total length of all patterns
is bounded by $m$, we will also know that the total cumulative length of all strings pointed
by branching node is also upper bounded by $m$. This is because precisely the pointed
strings are precisely the shortest strings in the subtrees rooted by the branching node.
The same holds for the leaves as the capacities of their respective predecessor data structures
will be no more than the total length of the patterns that correspond to the leaves which is $O(m)$.

We finally need to bound the total construction time which is dominated by the operations on
the predecessor data structures. The time is clearly bounded by $O(m\log\log r)$.
This is by a straightforward argument: as the total sum of the pointed strings is $O(m)$,
and we know that each element of a pointed string can only be inserted or deleted once,
and furthermore each insert/delete cost precisely $O(\log\log r)$ worst-case time, we conclude that
the total time spent in the predecessor data structure is worst case $O(m\log\log r)$.
We thus have the following theorem:

\begin{theorem}
\label{theACbuild}
Building the ${\mathcal AC}$ automaton for a set of $d$ consecutive
motifs of total length $m$ and where the longest motif is of length
$r$ can be done in worst-case $O(m\log\log r)$ time.
\end{theorem}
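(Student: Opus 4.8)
\medskip\noindent\textbf{Proof (plan).}
The plan is to mimic the single‑pattern construction of Lemma~\ref{kmpbuild}, paying the $O(\log\log r)$ factor only on predecessor operations and keeping everything else linear. First I would perform rank reduction: sort the symbols occurring in each pattern and relabel a pattern of length $\ell$ with the values $\{1,\dots,\ell\}$. Sorting $\ell$ integers with a van~Emde~Boas tree (Section~\ref{sec:tools}) costs $O(\ell\log\log\ell)\le O(\ell\log\log r)$ per pattern, hence $O(m\log\log r)$ in total. Next I would \emph{normalize} every pattern: scan it left to right while maintaining a dynamic predecessor structure over $\{1,\dots,\ell\}$, and replace $p_j$ by $\mbox{rep}(p_1\ldots p_{j-1},j)$, obtained from one predecessor and one successor query; this is again $O(\ell\log\log\ell)$ per pattern. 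By Property~\ref{trans}, two prefixes of patterns are order‑isomorphic iff their normalized strings coincide, so building a trie on the normalized strings (linear time, after a bucket sort of the pair labels, which range in $\{1,\dots,r\}^2$) produces exactly the state set and the forward transitions of the $\mathcal{AC}$ automaton; I store the outgoing transitions of each node in a dictionary keyed by the first symbol of the extending factor, so that the right transition is found in $O(1)$ time during matching.

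The heart of the argument is the construction of the failure links, which I would obtain by the usual layer‑by‑layer (breadth‑first) Aho–Corasick computation: the failure target of a child is found by walking the failure chain of its parent and testing, at each visited state, whether the appropriate forward transition exists — a test that, as in the $\mathcal{MP}$ case, reduces to a constant number of comparisons once the relevant window of pattern symbols is available. The single predecessor structure used in Lemma~\ref{kmpbuild} cannot be kept global here because failure links branch; instead I attach one predecessor structure per trie node, with the rule that a node inherits its parent's structure when it is the parent's only child, and that at a branching node a fresh structure is created for each child, into which the (at most $k$) elements currently stored are re‑inserted after relabelling them according to that child's pattern pointer. Each node's pattern pointer points to the shortest pattern in its subtree (to the pattern itself if the node is a leaf), and a structure whose pointer has length $u$ is allocated with capacity $u$ over universe $\{1,\dots,u\}$, which suffices because it never needs to hold more than the last $u$ symbols of that pattern.

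Two accounting steps then finish the proof. For space: passing to the Patricia trie, whose nodes are exactly the branching nodes and the leaves of the original trie, there are at most $2r-1$ of them, hence at most $2r-1$ predecessor structures; the sum of the lengths of the strings pointed to by branching nodes is at most $m$ (they are the shortest strings of disjoint subtrees), and the same bound holds for the leaves, so the total capacity, and therefore the total space, is $O(m)$. For time: every element of a pointed string is inserted into, and deleted from, predecessor structures only $O(1)$ times over the whole construction (at its own node, plus once when copied past a branching ancestor), each such operation costs worst‑case $O(\log\log r)$ in a van~Emde~Boas tree, the trie construction and all dictionary operations are linear, and the preprocessing of the first paragraph is $O(m\log\log r)$; summing yields the claimed bound.

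I expect the main obstacle to be making this last accounting rigorous: one must argue simultaneously that (i) at every node the attached predecessor structure contains precisely the suffix window of the pointed pattern that the failure computation reads, (ii) the amortized insertion/deletion count underlying Lemma~\ref{kmpbuild} survives the passage from a chain to a tree, and (iii) the copy‑down of $k$ elements at a branching node can be charged to symbols that will not be copied again — i.e.\ that the pointer‑reassignment scheme indeed makes each pattern symbol participate in $O(1)$ structure operations. The remaining details — linear‑time construction of the per‑node dictionaries, and handling a non‑prefix‑free $S$ — are routine.
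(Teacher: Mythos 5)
Your plan coincides with the paper's own construction essentially step for step: rank reduction and normalization in $O(m\log\log r)$, a trie on the normalized patterns with hashed forward transitions, per-node predecessor structures inherited along unary chains and duplicated at branching nodes, pattern pointers to the shortest pattern in each subtree, the Patricia-trie bound on the number of structures, and the charge of each pointed-string symbol to $O(1)$ predecessor operations at $O(\log\log r)$ each. The accounting subtleties you flag at the end are real, but the paper's proof asserts them at the same level of detail, so there is nothing to add.
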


\section{Single sublinear average-case motif searching}

Algorithm forward takes $O(n+m\log \log m)$ time in the worst case time but also on
average. We present now a very simple and efficient average
case-algorithm which takes
$O(\frac{m\log m}{\log\log m}+n\frac{\log m }{m \log \log m})$ time.

In order to search for a pattern $p$ in $t$, we first build a tree
$T$ of all isomorphic-order factors of $p^r$ of length $\frac{3.5\log m}{\log
\log m}$. $T$ is built by inserting each such factor one after the
other in a tree and building the corresponding path if it does not
already exist. The construction of this tree requires $O(\frac{m\log m}{\log\log m})$
time (details are given below). The search phase is performed through a window of size $m$ that is
shifted along the text. For each position of this window,
$b = \frac{3.5\log m}{\log \log m}$ symbols are read backward from the end
of the window in the tree $T$. Two cases may occurs: {\em (i)} either the factor is not recognized as a factor of $p^r$. This
  means that no occurrence of $p$ might overlap this factor and we can
  surely shift the search window after the last symbol of this
  factor; {\em (ii)} either the factor is recognized, in which case we simply check
  if the motif is present using a naive $O(m)$ algorithm, and we
  repeat this test for the next $O(m/2)$ symbols. This might
  require $O(m^2/2)$ steps in the worst case.
  Figure \ref{fail} illustrates the first case.

\begin{figure}[h]
  \centering
\includegraphics[width=9cm]{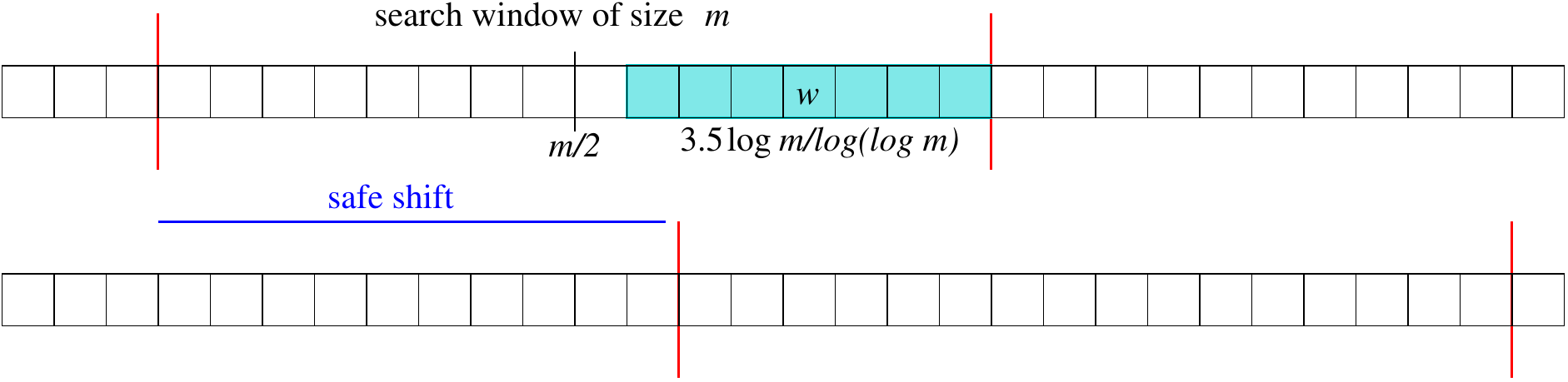}
\caption{First case: the motif $w^r$ is not recognized in the tree
  $T$, which implies that no occurrence of $p$ can overlap $w$ and the search
  window can surely be shifted after the first symbol of $w.$}
 \label{fail}
\end{figure}

Let us analyze the average complexity of our algorithm, in a
Bernoulli model with equiprobability of letters, that is, every
position in the text and the paper is independent of the others and
the probability of a symbol to appear is $1/\sigma.$ We also
consider that $ b < m/2$ since we are interested in analyzing the
average complexity for pattern long enough.

We count the average number of symbol comparisons required to shift
the search window of $m/2$ symbols to the right. As there are
$2n/m$ such segments of length $m/2$ symbols in $n$, we will simply
multiply the resulting complexity by $2n/m$ to gain the whole average
complexity of our algorithm.

There might be $O(b!)$ distinct motifs that could appear in the text while this
number is bounded by $m-b+1$ in the pattern (one by position). Thus,
with a probability bounded by $\frac{m-b +1}{b!}$ we will recognize
the segment of the text as a factor of $p$ and enter case 2. In which
case, moving the search window of $m/2$ symbols to the right using
the naive algorithm will require $O(m^2/2)$ worst case time.

In the other case which occurs with probability at least
$1-\frac{m-b+1}{b!}$, shifting the search window by $m/2$
symbols to the right only requires reading $b$ numbers.

The average complexity  (in terms of number of symbol reading and comparisons)
for shifting by $m/2$ symbols is thus (upper) bounded by $$A = O((m^2/2) \frac{m-b +1}{b!}+ b (1-\frac{m-b+1}{b!}))$$
and the whole complexity by $O((2n/m)A)$. By expanding and simplifying $A$
we get that $A = O(b + O(m^3/2b!)).$ Now using the famous Stirling
approximation $\ln(m!)=m\ln m-m+O(\ln m)$, it is not difficult to
prove that $b! = 2^{b\log b-b\log e+O(\log b)}= \Omega(m^3)$  and thus $A = O(b)$ and the whole
average time complexity (in terms of number of symbol reading and comparisons)
turns out to be $O(\frac{n\log m}{m \log \log m})$.

\subsection{Implementation details}

The tree $T$ can actually be built in $O(\frac{m\log m}{\log\log m})$ time by using
appropriate data structures.  Recall that the tree $T$ recognizes all the factors
of $p^r$ of length $\frac{3.5\log m}{\log
\log m}$. To implement $T$, we use the
same ${\mathcal AC}$ automaton presented in previous section to build the tree $T$,
but with two differences: we only need forward transitions and the length of any
pattern is bounded by $\frac{\log m}{\log\log m}$. Thus the cost is upper
bounded by $O(\frac{m\log m}{\log\log m}\cdot t)$, where $t$ is the time needed to do an operation
on the predecessor data structure (maximum of the times needed for inserts/deletes
and searches)
We now turn our attention to the cost of the matching phase.
From the previous section, we know that the total complexity
in terms of number of symbol reading and comparisons
is $O(\frac{n\log m}{m\log\log m})$.
The total cost of the matching phase is dominated by the multiplication of the total number
of text symbols read multiplied by the cost of a transition in the ${\mathcal AC}$
automaton which itself is dominated by the time to do an operation
on a predecessor data structure. The total cost of the matching phase is
thus $O(\frac{n\log m}{m\log\log m}\cdot t)$, where $t$ is the time needed to do an operation
on the predecessor data structure.

Now the performance of both matching and building phases crucially depend on
the used predecessor data structure.
 If a binary search tree is used then $t=O\left(\log\frac{\log m}{\log\log m}\right)=O(\log\log m)$
and the total matching time becomes $O(nt)=O(n\log\log m)$, and the total
building time becomes $O(m\log m)$.
However, we can do better if we work in the word-RAM model. Namely, we can use the
atomic-heap (see section~\ref{sec:tools}) which would add additional
$o(m)$ words of space and support all operations (queries, inserts and
deletes) in constant time on sets of size $\log^{O(1)} m$.
In our case, we have a set of size $O(\frac{\log m}{\log\log
  m})$ and thus the operations can be supported in constant time.
We thus have the following theorem:
\begin{theorem}
\label{theoall2}
Searching for a consecutive motif $p=p_1 \ldots p_m$ in a permutation
$t=t_1 \ldots t_n$ can be done in average $O(\frac{m\log m}{\log\log
  m}+\frac{n\log m}{m\log\log m})$ time.
\end{theorem}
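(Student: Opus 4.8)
The plan is to combine the average-case complexity analysis already carried out in the body of this section with the implementation details that fix the cost $t$ of a single predecessor-data-structure operation. Concretely, Theorem~\ref{theoall2} follows by instantiating the two running-time bounds derived above --- namely that the tree $T$ is built in $O(\frac{m\log m}{\log\log m}\cdot t)$ time, and that the matching phase performs $O(\frac{n\log m}{m\log\log m})$ symbol reads and comparisons, each costing $O(t)$ --- with the value $t=O(1)$. So the first step is simply to recall these two bounds and observe that the total time is $O\left((\frac{m\log m}{\log\log m}+\frac{n\log m}{m\log\log m})\cdot t\right)$.

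The second step is to justify $t=O(1)$. The forward transitions of $T$, which is an ${\mathcal AC}$-style automaton built on the factors of $p^r$ of length $b=\frac{3.5\log m}{\log\log m}$, are navigated by a predecessor query on a dynamic set of at most $b=O(\frac{\log m}{\log\log m})$ integers drawn from a universe of size $b$. Since $b=\log^{O(1)} m$, the special case noted in Section~\ref{sec:tools} applies: the atomic heap~\cite{Wi20} supports inserts, deletes, and predecessor queries in worst-case constant time on sets of size $\log^{O(1)} m$, using only $o(m)$ extra words of space in the word-RAM model. Hence every automaton transition --- in both the construction of $T$ and the scanning of the text --- costs $O(1)$, giving $t=O(1)$ and therefore the claimed building time $O(\frac{m\log m}{\log\log m})$ and matching time $O(\frac{n\log m}{m\log\log m})$, whose sum is the statement of the theorem.

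A few points need to be checked to make this airtight. First, one must confirm that the average-case analysis above is not affected by replacing symbol-comparison cost with automaton-transition cost: this is immediate because each symbol read corresponds to exactly one transition, and the analysis counted symbol reads. Second, one must check that the reduction of the pattern's alphabet from $[n]$ to $[b]$ needed before inserting elements into the atomic heap does not dominate: sorting and renaming the symbols of $p^r$ costs $O(m\log\log m)$, which is absorbed by $O(\frac{m\log m}{\log\log m})$ for $m$ large enough. Third, one should note that if one prefers not to invoke the atomic heap, using a plain balanced binary search tree still yields $t=O(\log\frac{\log m}{\log\log m})=O(\log\log m)$, so the bound degrades only to $O(m\log m+n\log\log m)$; the word-RAM atomic-heap instantiation is what achieves the sharp bound stated.

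The main obstacle is not any single deep estimate but making precise that the atomic heap is legitimately applicable here, i.e.\ that the set manipulated at every node genuinely has size $\log^{O(1)} m$ and lives in a polylogarithmic-size universe, and that its $o(m)$-word space and constant-time guarantees compose correctly with the rest of the data structures over the whole run. Once $t=O(1)$ is granted, the theorem is just the arithmetic sum of the two previously established bounds, so the proof is short.
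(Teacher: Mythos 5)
Your proposal is correct and follows essentially the same route as the paper: it combines the average-case count of $O(\frac{n\log m}{m\log\log m})$ symbol reads with the $O(\frac{m\log m}{\log\log m}\cdot t)$ construction of the factor tree $T$, and then sets $t=O(1)$ via the atomic heap on sets of size $O(\frac{\log m}{\log\log m})=\log^{O(1)}m$, exactly as in the paper's implementation-details subsection. Your additional sanity checks (that transitions and symbol reads are in bijection, and that the initial rank reduction is absorbed) are consistent with, and slightly more explicit than, the paper's own argument.
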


\section{Average optimality}

We prove in this section a lower bound on the average complexity of
any consecutive motif matching algorithm. The proof of this bound is
inspired by that of Yao \cite{Yao79} which proved on average lower bound of
$O(\frac{n\log_{|\Sigma|}m}{m})$ for matching a pattern of length $m$
in a text of length $n$, both taken on a alphabet $\Sigma$. We
prove in our case of interest an average lower bound of $O(\frac{n\log
  m}{m \log \log m})$ considering all permutations over $[1\ldots n]$
to be equiprobable. As this average complexity is reached by the
algorithm we designed in the previous section, this bound is tight.

We begin to circumscribe our problem on small segments of length
$2m-1$ of the text into which we search for. Precisely, following
\cite{Yao79,NF04}, we divide our text in $\lfloor n/(2m-1)\rfloor$
contiguous and no-overlapping segments $s_i, 1 \leq \leq \lfloor
n/(2m-1)\rfloor$, such that $s_i(t)=t_{(2m-1)(i-1)+1} \ldots
t_{(2m-1)i}.$
When searching for a pattern in $t$, there might be occurrences
overlapping two blocks. But as we are interested on a lower bound, the
following lemma allows us to focus on all segments.

\begin{lemma}
\label{pre-counting lemma-1}
A lower bound for finding a pattern $p$ inside all segments $s_i(t)$ is
also a lower bound to the problem of searching for all occurrences of
$p$ in $t$.
\end{lemma}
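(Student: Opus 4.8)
The plan is to show that any correct algorithm for the original search problem can be used, without asymptotic overhead, as an algorithm for the restricted problem of locating $p$ inside each of the disjoint segments $s_i(t)$, so that a lower bound on the latter transfers to the former. The key observation underlying the choice of segment length $2m-1$ is that a segment of this length is exactly large enough to contain at least one full occurrence window of $p$ entirely inside it: in fact it contains $m$ distinct starting positions $(2m-1)(i-1)+1, \ldots, (2m-1)(i-1)+m$ for which the whole factor of length $m$ lies within $s_i(t)$. Conversely, reading only the $2m-1$ symbols of $s_i(t)$ suffices to decide, for each such start position, whether $p$ consecutively occurs there, since order-isomorphism of a length-$m$ window depends only on the relative order of those $m$ symbols.

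First I would make precise the reduction: given an algorithm $\mathcal{A}$ that solves the original problem (report all occurrences of $p$ in an arbitrary permutation $t$ of $[n]$), and given a segment $s_i(t)$ presented as an input of length $2m-1$, we feed $\mathcal{A}$ this length-$(2m-1)$ string (after an order-preserving relabelling into $[2m-1]$, which does not change any order-isomorphism relation and hence is free in our comparison/symbol-reading cost model). The occurrences $\mathcal{A}$ reports are precisely the start positions inside $s_i(t)$ at which $p$ occurs, which is exactly the output required by the restricted problem. Hence the number of symbol comparisons (or symbol reads) that $\mathcal{A}$ uses on input $s_i(t)$ is at least the corresponding lower bound for the restricted problem on a random length-$(2m-1)$ permutation.

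Next I would account for the global count. Running $\mathcal{A}$ on the full text $t$ must, in particular, correctly determine all occurrences that fall entirely within any single segment $s_i(t)$; in order to do so it must read and compare enough symbols within each segment to solve the restricted problem there — any occurrence confined to $s_i(t)$ is detectable only by inspecting symbols of $s_i(t)$, and the segments are pairwise disjoint, so work done inside $s_i(t)$ cannot be "reused" for $s_j(t)$ with $j\neq i$. Therefore the total cost of $\mathcal{A}$ on $t$ is at least the sum, over the $\lfloor n/(2m-1)\rfloor$ segments, of the restricted-problem cost on each. Taking expectations over a uniformly random permutation $t$ of $[n]$, and using that the induced relative order on each segment is itself uniform (and that linearity of expectation applies even though the segments are not independent), the expected total cost of $\mathcal{A}$ is at least $\lfloor n/(2m-1)\rfloor$ times the expected restricted-problem cost, which is the claimed reduction.

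I expect the main obstacle to be the second step: rigorously arguing that the algorithm genuinely must pay the restricted cost "inside" each segment, rather than somehow amortizing comparisons across segment boundaries or inferring information about one segment from another. This is handled by the disjointness of the segments together with the fact that whether $p$ occurs at a position entirely inside $s_i(t)$ is a function solely of the symbols of $s_i(t)$ and is information-theoretically independent of the content of the other segments; consequently a decision about $s_i(t)$ cannot be made without incurring, on $s_i(t)$'s symbols alone, at least the restricted-problem cost. Making "must read enough symbols of this segment" precise is exactly the adversary/information-theoretic content borrowed from Yao's argument, and the remaining steps are routine.
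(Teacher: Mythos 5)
Your proof goes in the wrong direction and, in doing so, takes on a burden the lemma does not require. The statement only asks that the restricted problem (report the occurrences of $p$ lying entirely inside some segment $s_i(t)$) be no harder than the original one, and the paper proves this by an easy output-filtering reduction: run any algorithm $\mathcal{A}$ for the original problem on the whole text $t$, then discard from its output the occurrences that straddle a segment boundary or fall in the leftover tail. Since $\mathcal{A}$ runs in $O(\ell)$ time it reports at most $O(\ell)$ occurrences, and each boundary test is cheap, so the post-processing costs $O(\ell)$ and the converted algorithm solves the restricted problem within the same asymptotic bound; hence any lower bound for the restricted problem is a lower bound for $\mathcal{A}$. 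Your proposal never performs this reduction. Your first step instead feeds $\mathcal{A}$ a single length-$(2m-1)$ segment as its entire input, which relates the single-segment problem to the original problem on inputs of length $2m-1$, not to the original problem on inputs of length $n$, which is what the lemma concerns.

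The genuine gap is in your second step, where you try to bridge from single segments to the full text by asserting that the total cost of $\mathcal{A}$ on $t$ is at least the sum over segments of the restricted cost incurred ``inside'' each segment, because disjointness prevents work from being reused. That is a lower-bound claim about arbitrary algorithms --- an algorithm may interleave probes, spend fewer than the single-segment lower bound of probes in one segment while deciding it by some global strategy, and so on --- and making it precise is exactly the adversary/information-theoretic core of Yao's argument, which you explicitly defer as ``routine''. It is not routine, and it is not what this lemma needs: the per-segment decomposition of the average cost is the content of the \emph{next} lemma (Lemma~\ref{pre-counting lemma-2}), which the paper handles by linearity of expectation over identically distributed segments rather than by a cost-charging argument. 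For the present lemma the simple filtering reduction suffices and sidesteps the issue entirely; as written, your argument proves the lemma only modulo the unproven decomposition claim.
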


We now prove that instead of focusing on all segments $s_i(t)$, we can
focus on obtaining a lower bound to search $p$ in any single segment and
then extend the lover bound on searching for $p$ inside this segment to
searching for $p$ inside all segments, and thus, using the previous lemma,
to the whole text.

\begin{lemma}
\label{pre-counting lemma-2}
The average time for searching for $p$ inside all segments $s_i(t)$ is
$\lfloor n/(2m-1) \rfloor$ times the average time for searching for $p$
inside any such segment.
\end{lemma}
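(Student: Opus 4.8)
The plan is to establish Lemma~\ref{pre-counting lemma-2} by a linearity-of-expectation argument combined with the symmetry of the uniform distribution on permutations. First I would observe that, by Lemma~\ref{pre-counting lemma-1}, we may restrict attention to an algorithm $\mathcal{A}$ whose task is to report, for each segment $s_i(t)$, all occurrences of $p$ contained entirely inside that segment. Since the segments $s_1(t),\ldots,s_{\lfloor n/(2m-1)\rfloor}(t)$ are contiguous and non-overlapping, any such algorithm can be assumed (without increasing its cost) to process the segments independently: the restriction of $t$ to segment $s_i$ carries all the information needed to decide which in-segment occurrences exist there, so an optimal algorithm gains nothing by interleaving work across segments. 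Hence the total running time $T_{\mathcal{A}}(t)$ decomposes as $\sum_{i} T_{\mathcal{A},i}(s_i(t))$, where $T_{\mathcal{A},i}$ is the time spent on the $i$-th segment.

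Next I would take the expectation over a uniformly random permutation $t$ of $[n]$. By linearity of expectation, $\mathbb{E}[T_{\mathcal{A}}(t)] = \sum_{i} \mathbb{E}[T_{\mathcal{A},i}(s_i(t))]$. The key point is that, for a uniform random permutation $t$ of $[n]$, the block $s_i(t)$ — viewed up to order-isomorphism, which is all that matters for consecutive-pattern matching — is a uniformly random pattern on $2m-1$ positions, and this distribution is \emph{the same for every} $i$. Indeed, the sequence of relative ranks of any fixed window of $2m-1$ consecutive entries of a uniform random permutation of $[n]$ is uniformly distributed over the $(2m-1)!$ possible order-isomorphism types, independently of where the window sits. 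Therefore each term $\mathbb{E}[T_{\mathcal{A},i}(s_i(t))]$ equals a common value, namely the average time to search for $p$ inside a single segment; call it $\tau$. Summing gives $\mathbb{E}[T_{\mathcal{A}}(t)] = \lfloor n/(2m-1)\rfloor \cdot \tau$, which is exactly the claimed identity.

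The main obstacle — and the step deserving the most care — is the claim that an optimal (or lower-bound-witnessing) algorithm can be taken to treat the segments independently, so that the per-segment costs genuinely add up. One must argue this at the level of a lower bound: given \emph{any} algorithm solving the all-segments problem, one shows that the expected number of character probes it makes inside segment $s_i$ is at least $\tau$ (the single-segment optimum), because the algorithm's behaviour restricted to probes landing in $s_i$ constitutes a valid — possibly adaptive, possibly randomized — strategy for the single-segment problem, and the randomness of $s_i(t)$ is independent of the other segments. Care is needed here because an adversarial or clever algorithm might reuse probes, or probe across block boundaries; but since occurrences confined to $s_i$ depend only on entries inside $s_i$, cross-block probes are useless for certifying $s_i$, and the independence of the blocks' order-types under the uniform distribution prevents any amortization. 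Once this per-segment lower bound of $\tau$ is in hand, linearity of expectation yields the $\lfloor n/(2m-1)\rfloor\,\tau$ bound, completing the reduction to the single-segment analysis carried out in the following lemmas.
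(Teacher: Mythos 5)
Your proposal is correct and follows essentially the same route as the paper: the segments' order types are identically distributed (and independent), so by linearity of expectation the total average cost is $\lfloor n/(2m-1)\rfloor$ times the common per-segment average. The only difference is that you explicitly flag and justify the decomposition of the algorithm's cost into per-segment contributions --- a step the paper's one-line proof simply asserts --- which is a welcome extra precaution but not a departure in method.
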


Let $E(t)$ be the average complexity for searching $p$ in any
segments. Using the previous lemma, the whole average complexity is
$\sum_{i=1}^{\lfloor n/(2m-1) \rfloor} E(t) = \lfloor n/(2m-1)
\rfloor E(t) = \Omega (n/m) E(t).$

We now prove a lower bound for $E(t)$, which, using the two previous
lemma, gives us a lower bound for the whole problem.
Let ${\cal P}_m(\ell)$ the number of permutations of size $m$ that can be discarded
using a sliding window of size $m$ over a text of size $2m-1$ and checking only  $ 0< \ell \leq m$
positions in this window.




\begin{lemma}[Counting lemma]
\label{counting lemma}
Let $0<\ell \leq m$. Then
$$ |{\cal P}_m(\ell)| \; \leq \; m! \left( 1-\frac{1}{\ell!}
\right)^{\left\lceil{ \frac{m-1}{\ell^2}} \right\rceil}.$$
\end{lemma}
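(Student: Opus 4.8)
The plan is a Yao-style counting argument. Model a ``sliding window of size $m$ checking $\ell$ positions'' by a probe template $C=\{c_1<\cdots<c_\ell\}\subseteq\{1,\dots,m\}$: when the length-$m$ window starts at text position $i$ (so $1\le i\le m$ inside the segment $t_1\cdots t_{2m-1}$) the algorithm inspects the symbols at the positions $P_i=\{i+c-1:c\in C\}$, all lying in $\{1,\dots,2m-1\}$. Having done this for all window starts, the algorithm knows the relative order of $t$ on each $P_i$. Writing $\pi^\star(p)\in S_\ell$ for the relative order of the subword $(p_{c_1},\dots,p_{c_\ell})$ of a pattern $p\in S_m$, it follows that $p$ \emph{can be discarded} if and only if, for \emph{every} window start $i$, the relative order of $(t_j)_{j\in P_i}$ differs from $\pi^\star(p)$ --- otherwise the alignment at $i$ is consistent with everything inspected and cannot be excluded.

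The combinatorial core is to extract from the $m$ window starts a large sub-collection whose probe sets are pairwise disjoint. For $i<i'$ one has $P_i\cap P_{i'}\neq\emptyset$ exactly when $i'-i$ belongs to the difference set $D=\{c-c':c,c'\in C,\ c>c'\}$, and $|D|\le\binom{\ell}{2}$. On the vertex set $\{1,\dots,m\}$ put an edge between $i$ and $i'$ whenever $|i-i'|\in D$; the maximum degree is at most $2|D|\le\ell(\ell-1)$, so the standard greedy bound produces an independent set $I$ with
\begin{equation}
|I|\;\ge\;\Bigl\lceil\tfrac{m}{\ell^2-\ell+1}\Bigr\rceil\;\ge\;\Bigl\lceil\tfrac{m-1}{\ell^2}\Bigr\rceil ,
\end{equation}
the second inequality being elementary (the underlying reals satisfy $m/(\ell^2-\ell+1)>(m-1)/\ell^2$). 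For $i\in I$ the sets $P_i$ are pairwise disjoint.

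Now pass to probability: take $t_1\cdots t_{2m-1}$ uniformly at random among orderings of $\{1,\dots,2m-1\}$; the only fact used is that the relative orders of a uniform random permutation read on pairwise disjoint index blocks are mutually independent, each uniform on the corresponding symmetric group. Fix $p$ and set $\tau=\pi^\star(p)$. For $i\in I$ the events $E_i=\{\text{relative order of }(t_j)_{j\in P_i}\text{ equals }\tau\}$ are independent, each of probability $1/\ell!$, and by the characterization above $p$ is discardable only if all $E_i$, $i\in I$, fail; hence
\begin{equation}
\Pr{}_{\!t}[\,p\text{ discardable}\,]\;\le\;\Bigl(1-\tfrac{1}{\ell!}\Bigr)^{|I|}\;\le\;\Bigl(1-\tfrac{1}{\ell!}\Bigr)^{\lceil(m-1)/\ell^2\rceil}.
\end{equation}
Summing over all $m!$ patterns yields $\mathbb{E}_t\bigl[\,|{\cal P}_m(\ell)|\,\bigr]\le m!\,(1-1/\ell!)^{\lceil(m-1)/\ell^2\rceil}$, i.e.\ the claimed bound with $|{\cal P}_m(\ell)|$ understood as this expectation over a random text segment --- precisely the form consumed by the subsequent averaging.

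The main obstacle is the modelling step in the first paragraph: fixing the probe model so that ``discardable'' is a clean event and so that, on the chosen windows, the inspected relative orders are genuinely independent; for fully adaptive algorithms one must additionally argue that it suffices to condition on the set of inspected positions and that these effectively behave like a translation-invariant template. Once that is settled, the packing estimate and the $1/\ell!$ computation are routine.
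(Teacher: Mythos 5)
Your skeleton is the right one and coincides with the paper's: extract a family of pairwise disjoint probe sets, use the fact that the relative orders of a uniform permutation restricted to disjoint index sets are independent and uniform, and multiply the per-set survival probability $1-\frac{1}{\ell!}$ over $\lceil (m-1)/\ell^2\rceil$ sets. Your packing step (difference set $D$, greedy independent set, and the comparison $\lceil m/(\ell^2-\ell+1)\rceil\ge\lceil (m-1)/\ell^2\rceil$) is correct. The genuine gap is the modelling step, which you yourself flag as the main obstacle, and it is not a formality. The paper charges the algorithm $\ell$ accesses \emph{in total} on the segment (this is what the downstream accounting ``discarded by $\ell$ accesses, otherwise cost at least $\ell+1$'' requires); it packs in \emph{pattern} coordinates, taking $B_j\subseteq[m]$ to be the pattern positions aligned with accessed text positions at shift $j$, and randomizes over the pattern $p$, not the text. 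Because all the $B_j$ are induced by one global access set, ``shift $j$ cannot be excluded'' is exactly ``the relative order of $p$ on $B_j$ equals that of $t$ on $j+B_j$'', so information shared across shifts is automatically accounted for, and the bound $|{\cal P}_m(\ell)|\le m!\,(1-\frac{1}{\ell!})^{\lceil (m-1)/\ell^2\rceil}$ holds deterministically for every fixed text and access set.

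Your model instead probes $\ell$ positions \emph{per window} via a translation-invariant template (up to $m\ell$ accesses in total) and declares $p$ discardable iff for every window $i$ the relative order on $P_i$ \emph{alone} differs from $\pi^\star(p)$. That ``iff'' fails for any algorithm allowed to compare values across windows: probes issued for a window $i'$ that land in the span of window $i$ can exclude $p$ at $i$ even when your event $E_i$ holds, so ``discardable $\Rightarrow$ all $E_i$ fail'' is unjustified. Nor can you simply grant the algorithm that cross-window information, because then the stated bound becomes false in your model: with $\ell=2$ and $C=\{1,m\}$ the union of the $P_i$ is all of $\{1,\dots,2m-1\}$, the algorithm learns $t$ completely and discards all but at most $m$ of the $m!$ patterns, far exceeding $m!\,(1/2)^{\lceil (m-1)/4\rceil}$. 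So your inequality is proved only for a restricted-information oracle that no actual algorithm instantiates, and for an access budget ($m\ell$) that does not match the certificate-length accounting used afterwards. A secondary mismatch: you obtain only an expectation over random texts, whereas the paper's computation of $C(m)$ averages over patterns for a fixed text and access set and consumes the deterministic per-text bound. The repair is to adopt the paper's accounting: fix the $\ell$ accessed text positions, define the sets $B_j$ in pattern coordinates, run your greedy disjointness argument on the $B_j$, and randomize over $p$.
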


Let us consider now the whole set $S_m$ of permutations of
length $m$ which contains $m!$ such permutations. Given $1<l(m)\leq m$,
this set is the union of two distinct set ${\cal P}_m(\ell)$ and
$S_m \setminus {\cal P}_m(\ell),$ that is the set of motifs
discarded by a certificate of length l (or by $l$ accesses) and the
others. For all pattern in ${\cal P}_m(\ell),$ the average complexity
to be discarded is counted $1$. For any other motif in $S_m \setminus
{\cal P}_m(\ell),$ the average complexity is at least $l+1$.

The average complexity for discarding all patterns in $S_m$ is
thus $C(m)=\frac{|{\cal P}_m(\ell)|+ (m!- |{\cal P}_m(\ell)|)(l+1)}{m!}.$
We aim to find $l(m)$ that maximizes this expression when $m$ grows,
which will provide us a lower bound for the whole average complexity.
Now let us consider a fixed $l(m)$. We need to lower bound $C(m)$. As
$C(m)$ decreases when ${\cal P}_m(\ell)$ increases, this lower bound
is minimal when ${\cal P}_m(\ell)$ is as large as possible. Then, as
the counting lemma states that $|{\cal P}_m(\ell)| \; \leq \; m! \left(
1-\frac{1}{\ell!} \right)^{\left\lceil{ \frac{m-1}{\ell^2}}\right\rceil},$
$C(m)$ is minimal when
 $|{\cal P}_m(\ell)| \; = \; m! \left(
1-\frac{1}{\ell!} \right)^{\left\lceil{ \frac{m-1}{\ell^2}}\right\rceil}.$
We now arbitrarily impose $ 98/100 \leq \frac{|{\cal P}_m(\ell)|}{m!}\leq 99/100$. With
the left constraint, $C(m)\geq l - 98/100l +1 = \Omega(l)$. We want to compute $l(m)$ such that
$$98/100  \; \leq \;  \frac{|{\cal P}_m(\ell)|}{m!} = \left( 1-\frac{1}{\ell!} \right)^{\left\lceil{ \frac{m-1}{\ell^2}} \right\rceil} \leq 99/100.$$
Let us impose $\left\lceil \frac{m-1}{\ell^2}
\right\rceil \times \frac{1}{l!} \leq 1/10~ (ineq. 1)$.
%
This allows us to approximate our equation using the classical formula
$(1+x)^a=1+ax+\frac{a(a-1)}{2!}x^2 + \ldots + \frac{a!}{n!(a-n)!}x^n = 1+ax+\gamma$ where $a = \left\lceil
  \frac{m-1}{\ell^2} \right\rceil$, $x = \frac{-1}{l!}$ and $\gamma=\sum_{i=2}^n\frac{a!}{i!(a-i)!}x^i$.
It is easy to see that inequality (1) implies that $\gamma$ converges and is dominated by its first term which is bounded $\frac{a(a-1)}{2!}x^2\leq 1/200$.
We thus deduce that $(1+x)^a\in[1+ax,1+ax+1/200]$ which implies that $(1+x)^a-1/200\leq 1+ax\leq (1+x)^a$.
From
$(1+x)^a=\frac{|{\cal P}_m(\ell)|}{m!}\in[\frac{98}{100},\frac{99}{100}]$,
we obtain
$\frac{98}{100}-\frac{1}{200}\leq 1+ax\leq \frac{99}{100}$.
%
%
By replacing $a$ and $x$ in $1+ax$ we get :
$$\frac{98}{100}-\frac{1}{200}=195/200\; \leq \; 1-\left\lceil \frac{m-1}{\ell^2}\right\rceil \times \frac{1}{l!}\leq 99/100.$$
%
%
%
%
%
%
We prove in appendix that $l= \frac{b \log m}{\log \log m}$ with $b =
1 + o(1)$ verify these two inequalities and inequality (1). Thus
$O(\frac{n\log m}{m\log \log m})$ is a lower bound of the whole
average complexity for searching for a consecutive motif in a
permutation.










\subsection*{Acknowledgements} We would like to thanks Carine Pivoteau, Cyril Nicaud and Elie de Panafieu for checking parts of our calculus.


\bibliographystyle{plain}


\newpage
\section*{Appendix}

\bigskip
\begin{center}
   \rule{10cm}{1pt}.
\end{center}
\bigskip
\begin{proof}[Of Lemma~\ref{lemmasize}]

\noindent
{\bf Point 1.} We adapt the technique of \cite{Sim94} to our
framework. Let $q =\delta(x,[i,j])$ a backward transition from $x$ to
$q$ such that $q\geq 2$. Then $p_1 \ldots p_{q-1}$ is order-isomorphic
to the suffix of $p_1 \ldots p_x$ of length $q-1$. But either (a) $p_1
\ldots p_{q}$ is not order-isomorphic with $p_1 \ldots p_x$, or (b)
$x=m$ ($x$ is the last state of the automaton. Let $\ell=x-q.$ We prove
now {\em a contrario} that no other backward transition $q'
=\delta(x',[i',j'])$ such that $q'\geq 2$ can accept the same
difference $\ell'=x'-q'=\ell$. Let $q' =\delta(x',[i',j'])$ be such a
transition and consider without lost of generality that $2 \leq q'<q.$
Then $p_1 \ldots p_{q'-1}$ would be order-isomorphic to the suffix of
$p_1 \ldots p_{x'}$ of length $q-'1$, and $p_1 \ldots p_{q'}$ must not
be order-isomorphic to $p_1 \ldots p_{x'}p_{x'+1}.$ However, as $2 \leq
q'<q,$ $p_1 \ldots p_{q'}$ is a prefix of $p_1 \ldots p_{q-1}, $ and
as $l'=l',$ $p_1 \ldots p_{q-1}$ is order-isomorphic to the prefix of
$p_1 \ldots p_x$ of length $q'$, which is exactly $p_1 \ldots
p_{x'}p_{x'+1}.$ This leads to a contradiction and for a given $1\leq
\ell<m,$ there exists at most one backward transition $q
=\delta(x,[i,j]), q\geq 2$ such that $x-q=\ell$. This bounds the number
of such backward transition to $m-2$. Let $N(x)$ be the number of
backward transitions $q =\delta(x,[i,j])$ from $x$ such that $q\geq
2$.

\medskip
\noindent
{\bf Point 2.} We consider now all backward transitions $1 =\delta(x,[i,j])$
reaching state $1$. We denote such a transition a 1-transition. Note that state $0$ is
never reached by any transition because any two integers are always
order-isomorphic. The key observation is that from each state $x$
source of the transition, the number of such 1-transitions from $x$ is bounded by
$N(x)+2.$ This is true since 1-transitions and other transitions must be
interleaved to cover $[-\infty,+\infty].$ Therefore, as the total number of
$N(x)$ is bounded by $m-2$, the number of 1-transitions is bounded by $2m-4.$

\medskip
\noindent
{\bf Point 3.} The number of forward transitions is $m+1$, thus
the whole number of transitions is bounded by $4m-5.$
\qed
\end{proof}

\bigskip
\begin{center}
   \rule{10cm}{1pt}.
\end{center}
\bigskip

\begin{proof}[Of Lemma~\ref{searchphase}]
Searching for $p$ in $t$ using the forward automaton of $p$ can be
easily done reading all symbols of the text one after the other. But
at each state one must identify the right outgoing transition, which
normally requires to search in a list or an AVL tree. This would add a
polylog factor to all integer reading and thus the complexity would be
of the form $O(n. \mbox{polylog}(m)).$ However, the structure of the
forward automaton combined with the fact that we imposed all outgoing
transitions of each node to be sorted increasingly to the length of
the transition allow us to amortize the search complexity of the
searching phase along the permutation. The resulting search phase
complexity is $O(n)$ time. Indeed, let us search $t$ through the
automaton, reading one symbol at a time reaching a current state
$x$. Let us assume we read the text until position $i$ and we want to
match $t_{i+1}$. We test if $t_{i+1}$ belongs to the interval $[i,j]$
labeling $x+1=\delta(x,[i,j])$ if $x<m$. If yes, we follow this
forward transition. If not, we test each backward transition from $x$
in increasing length order.

\begin{figure}[htb]
  \centering
\includegraphics[width=6cm]{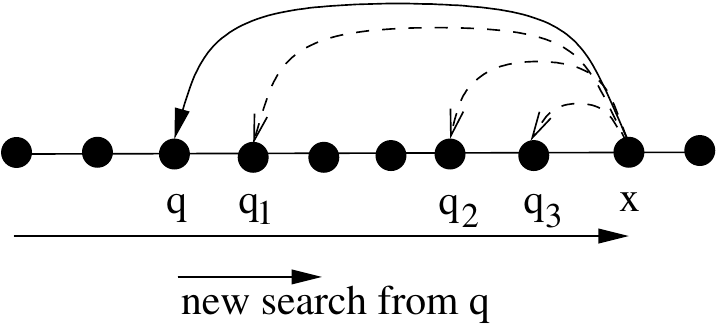}
\caption{Amortized complexity of the forward search. The search starts
  again from $q$. On this instance $l=3$ and $q+3<x.$}
 \label{forwardcomp}
\end{figure}

The important point to notice is that after having identified the
right backward transition from $x$ for $t_{i+1}$ reaching state $q$
(there must be one), the search for $t_{i+2}$ starts from $q< x.$
Moreover, we associate all $l$ transitions $q_k=\delta_k(x,[i,j])$
touched before finding the right one to its ending state which
verifies $q<q_k<x$. Thus $q+\ell<x$. This point is illustrated in Figure
\ref{forwardcomp}. As the search starts again from $q$ and that at
most one forward transition is passed through by text symbol, the
total number of forward and backward transitions touched or passed
through when reading the whole text $t=t_1\ldots t_n$ is thus bounded
by $2n$.
\qed
\end{proof}

\bigskip
\begin{center}
   \rule{10cm}{1pt}.
\end{center}
\bigskip

\begin{proof}[Proof of lemma~\ref{kmp}].
Exactly as in the case of a classical text, we amortize the complexity
of the search over the number of transitions we pass through and the
number of reinitialisations of the search we do if no more failure
transition is available. Each time we pass through a failure
transition, we decrease the state from where we will go on the search
if the state is validated. Thus, there can be at most as many failure
transitions passed through during the whole reading of the text as the
number of forward transitions that has been passed through. Since this
number is at most the size of the text, the total number of
transitions touched is at most $2n$. Then, if after a descent from
failure transition to failure transition no more outgoing transition
exists, we reinitialise the search to state 1. Thus there are at most
$n$ such reinitialisations and the total complexity of transitions and
states touched is bounded by $3n$.
\qed
\end{proof}

\bigskip
\begin{center}
   \rule{10cm}{1pt}.
\end{center}
\bigskip

\begin{proof}[Of Lemma~\ref{kmpbuild}]
Before processing, the pattern we first reduce the range of the keys
from $[n]$ to $[m]$. This is done in deterministic
$O(m\log\log m)$ times by first sorting the keys using the fastest integer sorting
algorithm due to Han~\cite{Ha02}, and then replacing each key by its
rank obtained from the sorting.

We then process the pattern in left-to-right in $m$ steps and at each step $j$
determine the failure and forward transitions outgoing of state $j$. We
use two predecessor data structures that require $O(m)$ words of
space and support insert, delete and query operations (a query
operation returns both the predecessor and the successor) in (worst-case)
time $O(\log\log m)$. As we move forward in the pattern, we insert each
symbol in both predecessor data structures (except for the first
symbol which is only inserted in the first predecessor data
structure). The difference between the two predecessor data structures
is that the first one will only get insertions while the second one
can also get deletions. The first is used to determine
forward transitions while the second one is used to determine
failure transitions.

We now show how we determine the transitions at each step $j$. The
forward transitions connecting state $j$ to state $j+1$ is labeled by
$\mbox{rep}(p,j+1)$. The latter is determined by doing a predecessor
search for $p_{j+1}$ on the first predecessor data structure. This
gives us both the predecessor and successor of $p_{j+1}$ among
$p_1\ldots p_{j}$ which is exactly $\mbox{rep}(p,j+1)$.

The failure transition is determined in the following way. If the
target state of the failure transitions of state $j-1$ is state
$i$. Then we do a predecessor query on the the second predecessor data
structure. If the pair of returned prefixes is precisely
$\mbox{rep}(p,i+1)$, then we can make $i+1$ as a target for state
$j$. Otherwise we take the failure transition of state $j-1$. If that
transitions leads to a state $k$, then we remove the symbols
$p_{j-i}..p_{j-k}$ from the second predecessor data structure.
\qed
\end{proof}

\bigskip
\begin{center}
   \rule{10cm}{1pt}.
\end{center}
\bigskip

\begin{proof}[Of Property~\ref{forward}]
We first build the Morris-Pratt representation in $O(m\log\log m)$
time. We then consider each state $x>0$ corresponding to the $p_1
\ldots p_x$ from left to right and for each such state we expand its
backward transitions. Let us sort all $p_i, 1 \leq i \leq x$ and
consider the resulting order $p_{i_0}=-\infty<p_{i_1}<\ldots
<p_{i_k}<+\infty=p_{i_{k+1}}$. We build one outgoing transition for
each interval $[p_{i_j},p_{i_{j+1}}]$, excepted if
$p_{i_{j+1}}=p_{i_{j}}+1.$ This transition is computed as follows. Let
$q$ be the image state of the failure transition from $x$. We pick a
value $z$ in $[p_{i_j},p_{i_{j+1}}]$ an search for $z$ from $q$. Let
$q'$ be the new state reached. We create a backward transition form $x$
to $q'$ labeled $[p_{i_j},p_{i_{j+1}}]$. After this process we created
at most $m^2$ edges in at most $O(m^2\log\log m)$ time.

We now merge backward transitions from the same state to the same
state that are labeled by consecutive intervals. This required at
most $O(m^2)$ time. The whole algorithm thus requires $O(m^2\log\log
m)$ time.
\qed
\end{proof}

\bigskip
\begin{center}
   \rule{10cm}{1pt}.
\end{center}
\bigskip

\begin{proof}[Of Lemma~\ref{pre-counting lemma-1}]
Let A be an algorithm to search for $p$ in $t$ running in $O(l)$
time. It can be converted in an algorithm to search for $p$ inside all
$s_i(t)$ also running in $O(t)$ since: (a) it suffices to remove all
occurrences overlapping two segments and occurrences in the last few
remaining symbols of $t$ out of a segment; and (b) in $O(l)$ time,
only at most $O(l)$ such occurrences can be reported, so only $O(l)$
occurrences might have to be discarded; and (c) testing if an
occurrence is overlapping two segments can be done in constant $O(l)$
time. The extra work required to remove all overlapping occurrences is
therefore also $O(l)$, and thus $A$ can be converted in an $O(l)$ algorithm to search for $p$
inside all segments $s_i(t)$. This implies that a lower bound for this
last problem is also a lower bound for A.
\end{proof}
\bigskip
\begin{center}
   \rule{10cm}{1pt}.
\end{center}
\bigskip

\begin{proof}[Of Lemma~\ref{pre-counting lemma-2}]
All segments $s_i(t)$ are identically distributed, independently of
each other. Thus the average time for searching for $p$ in any segment
is the same. As the expected time is the sum of the expected time to
search for $p$ in all segments, the sum commutes and the expected time
becomes $\lfloor n/(2m-1) \rfloor$ times the average expected time to
search for $p$ in any segment.
\qed
\end{proof}

\bigskip
\begin{center}
   \rule{10cm}{1pt}.
\end{center}
\bigskip

\begin{proof}[Of Lemma~\ref{counting lemma}]
Let $1 \leq i_1 < i_2 \ldots < i_\ell \leq m$ be the position of the accesses.
For $0 \leq j \leq d,$ we define
 $$B_j=\{ b \mid b \in \{1,2,\ldots ,m\} \mbox{ and } j+b =i_t \mbox{
  for some } 1 \leq t \leq \ell \}\; .$$ Note that $|B_j|\leq \ell$
for $1 \leq j \leq d$.  Also, for any $p \in {\cal P}_m(\ell)$, since
it is canceled by the $\ell$ accesses considering isomorphic orders, for
all shift $j$ there is a mismatch, {\em i.e.} there exists two
positions $k,\ell \in B_j$ such that $p[k]>p[\ell]$ and $t[j+k]<
t[j+\ell]$.  We then show that we can find $J \subset
\{0,1,\ldots,d\},$ $|J| = \left \lceil d /\ell^2 \right \rceil$, such
that $B_{j_1} \cap B_{j_2} = \emptyset$ for $j_1 \neq j_2$ in $J$.

We use a greedy procedure to find $J$. Let $j_1=0$.  Inductively,
suppose that we have found $j_1 \dots j_{k-1}$.  Then $j_k$ is
obtained by finding the smallest $j$ such that $B_j$ is disjoint from
the unions of the previous positions we have already chosen, namely
$B= B_{j_1}\cup B_{j_2}\cup \ldots B_{j_{k_1}}\,.$ We claim that this
procedure allows us to find at least $\left \lceil d /\ell^2 \right
\rceil$ such sets.  We prove in fact that $j_k \leq \ell^2(k-1)$ as
long as $\ell^2(k-1) \leq d$.  Observe that $B$ contains at most
$\ell(s-1)$ positions.  We thus claim that at least one of the sets in
${\cal F} = \{B_0,B_1,\ldots, B_{\ell^2(s-1)} \}$ is disjoint from
$B$.  If not, for each $r$, $0 \leq r \leq \ell^2(s-1)$ there exists a
pair $(b,i_t)$ such that $b \in B_r \cap B$ and $r+b=i_t$ for some $1
\leq t \leq \ell$.  So there must exists at least $\ell^2(s-1)+1$ such
pairs, one for each set $B_r$.  But the total number of such pair is
no more than $|B| \cdot \ell \leq \ell^2(s-1)$, a contradiction.

Now take $J \subset \{0,1,\ldots,d\},$ $|J| = \left \lceil d /\ell^2
\right \rceil$, such that $B_{j_1} \cap B_{j_2} = \emptyset$ for $j_1
\neq j_2$ in $J$.  To prove the lemma, consider a random pattern $p$
from $\mathcal{S}_m$ (the set of permutations of size $m$). Then for all shift $j \in \{0
\dots d\}$, there is a mismatch.  So
\begin{align*}
P(p \in {\cal P}_m(\psi))
& = P(\forall j \in \{1 \dots d\} \text{, there is a mismatch})\\
& \leq P(\forall j \in J \text{, there is a mismatch}).
\end{align*}
Notice that for each $j \in \{1 \dots d\}$, the probability that there is no mismatch with $p$ at shift $j$ is $\frac{1}{|B_j|!}$ which is the probability that the permutation formed by the non-$\star$ symbol is the good one. Since all the sets $B_j$ for $j \in J$ are disjoints, we have
\begin{align*}
P(p \in {\cal P}_m(\ell))
&\leq \prod\limits_{j \in J} P(\text{there is a mismatch at shift }j) \\
&\leq \prod\limits_{j \in J} (1-\frac{1}{|B_j|!})
\leq \left( 1-\frac{1}{\ell!} \right) ^{\left\lceil{\frac{d}{\ell^2}} \right\rceil}
\end{align*}
concluding the proof since $|\mathcal{S}_m| = m!\,$.
\qed
\end{proof}

\bigskip
\begin{center}
   \rule{10cm}{1pt}.
\end{center}
\bigskip

Here we prove that $\ell= \frac{b \log m}{\log \log m}$ with some $b = 1 + o(1)$ verify the following inequalities for $m$ large enough:
$\left\lceil \frac{m-1}{\ell^2} \right\rceil \times \frac{1}{\ell!} \leq 1/10~ (ineq. 1)$
and
$$\frac{195}{200}\ \leq \ 1-\left\lceil \frac{m-1}{\ell^2}\right\rceil \times \frac{1}{\ell!}\ \leq \ \frac{99}{100}\ .$$

\bigskip

Let's recall that the Gamma function of Euler $\Gamma$ is an increasing bijection from $\mathbb{R}_{\geq 2}$ to $\mathbb{R}_{\geq 1}$
verifying that $\Gamma(n+1) = n!$ for all $n \in \mathbb{N}$.

Thus the function $s \mapsto s^2 \Gamma(s+1)$ is an increasing bijection from $\mathbb{R}_{\geq 1}$ to $\mathbb{R}_{\geq 1}$.

For all $m \geq 1$, this allows to define $s \in \mathbb{R}_{\geq 1}$ such that $s^2 \Gamma(s+1) = 50 m$.

Thus $s \rightarrow \infty$ when $m \rightarrow \infty$.

Then we set $b = s \times \frac{\log \log m}{\log m}$.

Taking $\ell= s$, then we have $\ell= \frac{b \log m}{\log \log m}$.

Let us prove that $\ell$ satisfied the desired inequalities.

\bigskip

We have $\ell^2 \times \ell ! = s^2 \Gamma(s+1) = 50 m$.

Thus $\left\lceil \frac{m-1}{\ell^2} \right\rceil \times \frac{1}{\ell!} \leq \frac{m}{50 m} \leq \frac{1}{50}$
and $\left\lceil \frac{m-1}{\ell^2} \right\rceil \times \frac{1}{\ell!} \geq \frac{m-1}{50 m} \geq \frac{1}{100}$
for $m$ large enough.
This proves the desired inequalities since $1/50 \leq 1/40 = 1 - 195/200 \leq 1/10$.

\bigskip

Let us prove now that $b = 1 + o(1)$.

By the Stirling inequality, we have that $F(s) < \Gamma(s+1) < 2 F(s)$ with $F(s) = \left(\frac{s}{e}\right)^s\sqrt{2\pi s}$.

Thus  $\sqrt{2\pi} \exp(G(s)) < s^2 \Gamma(s+1) < 2 \sqrt{2\pi} \exp(G(s))$\\ with $G(s) = -s + \left(s+\frac{5}{2}\right) \log(s)$.

We deduce that $\frac{25m}{\sqrt{2\pi}} < \exp(G(s)) < \frac{50m}{\sqrt{2\pi}}$.

Therefore $\log\left(\frac{25}{\sqrt{2\pi}}\right) + \log m < G(s) < \log\left(\frac{50}{\sqrt{2\pi}}\right) + \log m$.

Recall that $f(m) \sim g(m)$ means that $f(m) = g(m) + o(g(m))$ when $m \rightarrow \infty$.

Thus we have $G(s) \sim \log m$.

It is then enough to prove that $G(s) \sim b \log m$. Indeed this imply $b \sim 1$, i.e., $b = 1 + o(1)$.

But $G(s) = -s + \left(s+\frac{5}{2}\right) \log(s) = s \log s + o(s \log s)$.

Thus $s = o(G(s))$, i.e., $s = o(\log m )$.

Since $s =  \frac{b \log m}{\log \log m}$, this means that $b = o(\log \log m)$

Moreover  $\log s =  \log b + \log \log m - \log \log \log m = \log \log m + o(\log \log m)$.

As $G(s) = -s + \left(s+\frac{5}{2}\right) \log(s)$ we then have:

$G(s) = -\frac{b \log m}{\log \log m} + \left( \frac{b \log m}{\log \log m} +\frac{5}{2}\right)\Big( \log \log m + o(\log \log m) \Big)$.

Thus $G(s) \sim b \log m$, concluding the proof.

\bigskip
\begin{center}
   \rule{10cm}{1pt}.
\end{center}
\bigskip


\end{document}